\newtheorem{theorem}{Theorem}
\newtheorem*{theoremgb}{Theorem}
\newtheorem{lemma}{Lemma}
\newtheorem{proposition}{Proposition}
\newtheorem{definition}{Definition}
\theoremstyle{remark}
\newtheorem*{remark}{Remark}
\newtheorem{example}{Example}
\newcommand{\scri}{\mathscr{I}}
\newcommand{\scrip}{\scri^{+}}
\newcommand{\scrim}{\scri^{-}}
\renewcommand{\th}{\textrm{\upshape{\thorn}}}
\newcommand{\Ro}{\textrm{\begin{CJK}{UTF8}{min}ロ\end{CJK}}}
\begin{document}

\title{Newman-Penrose-like exact and approximate conservation laws: a covariant and conformal formulation}
\author{Berend Schneider\orcidlink{0009-0004-4099-2574}}
\email{berend@uoguelph.ca}
\affiliation{Department of Physics, University of Guelph, Guelph, Ontario, Canada N1G 2W1}

\begin{abstract}
Using a conformal extension of the Geroch-Held-Penrose (GHP) formalism I derive a manifestly covariant and conformal expression of Newman-Penrose (NP) constants, which are a set of conserved quantities associated to solutions to the wave equation on light cones in Minkowski space. The resulting expression generalizes to massless fields of arbitrary spin---including the electromagnetic field, Weyl fermions, and the linearized Weyl tensor---on conformally flat space-times. 

In some non-conformally flat space-times there may exist conserved charges on very special null hypersurfaces. Using the conformal GHP formalism I prove the existence of conserved Aretakis charges on extremal Killing horizons. 

In the absence of exact conservation laws it is still useful to extend the definition for NP constants to some asymptotically flat curved space-times, where the conservation laws become approximate conservation laws which rapidly approach exact conservation laws at null infinity. I derive explicit expressions for spherically symmetric space-times. 
\end{abstract}

\maketitle

\tableofcontents

\section{Introduction and summary}

Associated with solutions to the standard wave equation in Minkowski space, there are an infinite number of conservation laws on each light cone. The corresponding conserved quantities are known as \textit{Newman-Penrose (NP) constants}. In this paper, I derive a covariant and conformal representation of NP constants for massless fields of arbitrary spin---including the electromagnetic field, Weyl fermions, and the linearized Weyl tensor---using a novel conformal extension of the Geroch-Held-Penrose (GHP) formalism \cite{Geroch1973, Penrose1985, Penrose1986} (cf. \S5.6 of \cite{Penrose1985}, and \cite{Ludwig1988} for other proposed conformal extensions). This covariant representation may be extrapolated to curved space-times, where they define generalized NP charges satisfying \textit{approximate conservation laws}. \\
\\
The relevance of the approximate NP conservation laws is found in their application to the study of asymptotics. They can be used to propagate solutions away from an initial data surface. Physically, the `non-conservation' is due to the backscattering of radiation in the presence of curvature. To illustrate this, I will begin by sketching a derivation, originally due to Kehrberger \cite{Kehrberger2021a}, of the asymptotic form of a massless scalar field on a Schwarzschild background. The crux of the proof is to integrate an approximate conservation law, like the ones derived in this paper. \\
\\
On space-times that are not conformally flat there generally are no conservation laws, although there are a few interesting exceptions. On space-times that are asymptotically flat at null infinity $\scri$, conservation laws exist on $\scri$ for fields satisfying the massless free-field equations \cite{Newman1965,Newman1968,Exton1969}. Curiously, this includes the full non-linear Einstein vacuum equations. In contrast with the linearized Einstein equations, the conserved charges associated to the full Einstein equations are generally non-vanishing even in stationary space-times and thus have non-trivial physical content. A second place where conservation laws have been found is on the horizon $\mathcal{H}$ of extremal Reissner-Nordstr\"om black holes \cite{Aretakis2011}. Since $\scri$ in asymptotically flat space-times and $\mathcal{H}$ in the extremal Reissner-Nordstr\"om space-times are both conformally flat null hypersurfaces it is perhaps not entirely surprising that they contain conservation laws given the conformal invariance of the NP constants. What is more surprising is that conservation laws associated to solutions to the scalar wave equation were shown by Aretakis to exist on \textit{any} extremal Killing horizon (including the event horizon in the extremal Kerr space-time), even though this horizon is not conformally flat \cite{Aretakis2012}. Proving a similar result for fields of non-zero spin seems to be more complicated. Aretakis-like conserved electromagnetic- and gravitational charges were shown to exist on the extremal Kerr horizon by Lucietti and Reall \cite{Lucietti2012}, and on a few other examples of higher dimensional extremal black holes by Murata \cite{Murata2013}. Aretakis has further shown that the existence of conserved charges associated to solutions to the scalar wave equation is equivalent to the existence of solutions to a certain elliptic differential equation on the hypersurface \cite{Aretakis2013, Aretakis2014}. I extend this result to wave equations of arbitrary spin, and prove the existence of conserved charges associated to the scalar wave equation on extremal Killing horizons using the conformal GHP formalism. \\
\\
I follow the conventions of Penrose \& Rinder \cite{Penrose1985,Penrose1986}: Space-time is a four-dimensional Lorentzian manifold with metric $g_{ab}$ that has signature $(+,-,-,-)$. Tensor fields are denoted with lowercase Latin abstract indices $a, b, c, \dots$. 

\subsection{Newman-Penrose constants in Minkowski space}

Let us start with the simple case of the scalar wave equation $\Box\phi = 0$, given by
\begin{align}
    \partial_u\partial_v(r\phi_\ell) &= -\frac{\ell(\ell+1)}{r^2}(r\phi_\ell) ,
\end{align}
where $u := \tfrac{1}{2}(t - r)$ and $v := \tfrac{1}{2}(t + r)$ are the standard outgoing and ingoing null coordinates, and where $\phi_\ell$ is the $\ell$-mode of $\phi$ in the spherical harmonic decomposition $\phi = \sum_\ell\phi_\ell(u,v)Y_\ell$. Note that I am suppressing the azimuthal mode number $m$. Spherically symmetric ($\ell=0$) modes satisfy
\begin{align}
    \label{eq:flat_spherical_mode}
    \partial_u\partial_v(r\phi_{\ell=0}) &= 0 .
\end{align}
The solutions to Eq.~\eqref{eq:flat_spherical_mode} are given by a superposition of outgoing and ingoing waves; $\phi_{\ell=0} = r^{-1}U(u) + r^{-1}V(v)$. Purely outgoing waves satisfy $\partial_v(r\phi_{\ell=0}) = 0$, which motivates us to call $\partial_v(r\phi_{\ell=0})$ the ingoing radiation field. The wave equation Eq.~\eqref{eq:flat_spherical_mode} takes the form of a conservation law for ingoing radiation: 
\begin{align}
    \partial_v(r\phi_{\ell=0}) = \partial_v(r\phi_{\ell=0})|_{u=-\infty} ,
\end{align}
so that all ingoing waves can be traced back to $\scrim$. The wave equation for arbitrary $\ell$-modes does not directly take the form of a conservation law, but for each mode it is possible to construct the following conservation law: 
\begin{align}
    \label{eq:flat_NP_laws}
    \partial_u\Bigl(r^{-2\ell}\partial_v\bigl[(r^2\partial_v)^\ell(r\phi_\ell)\bigr]\Bigr) = 0 .
\end{align}
The conserved quantities $Q_\ell = r^{-2\ell}\partial_v\bigl[(r^2\partial_v)^\ell(r\phi_\ell)\bigr]$ are called \textit{Newman-Penrose (NP) constants}. 

\subsection{Newman-Penrose-like approximate conservation laws in Schwarzschild and their application}

To see how the picture changes in curved space-times, let us first consider spherically symmetric solutions to the scalar wave equation in the Schwarzschild space-time. In double null coordinates given by $ds^2 = 4fdudv - r^2dS^2$, where $f := 1 - 2mr^{-1}$ and $dS^2$ is the unit sphere metric, the wave equation for spherically symmetric solutions is given by:
\begin{align}
    \label{eq:Schw_spherical_mode}
    \partial_u\partial_v(r\phi) &= -\frac{2mf}{r^3}(r\phi) .
\end{align}
This now has the form of an approximate conservation law, which becomes the exact conservation law Eq.~\eqref{eq:flat_spherical_mode} in the Minkowski space limit $m \to 0$. The non-conservation may be understood physically as a consequence of the failure of the \textit{strong Huygens'} principle in the presence of curvature. Waves may backscatter off the curvature, so that initially purely outgoing radiation may evolve to acquire an ingoing component. \\
\\
I will now briefly sketch an argument, due to Kehrberger \cite{Kehrberger2021a}, connecting the asymptotics of initial data to the asymptotics of $\phi$ at $\scrip$, by integrating the approximate conservation law \eqref{eq:Schw_spherical_mode}: Suppose data is given on an ingoing null hypersurface $v = 0$ by $r\phi|_{v=0,u=-\infty} = 0$ and $\partial_u(r\phi)|_{v=0} = O(|u|^{-p})$ where $p > 1$ is a constant, and on $\scrim$ by the no ingoing radiation condition $\partial_v(r\phi)|_{u=-\infty} = 0$. Suppose $\phi$ satisfies the preliminary global estimate $|\phi| \lesssim r^{-\frac{1}{2}}$, where the notation $f \lesssim g$ means that there exists a sufficiently large constant $C$ such that $f \leq Cg$. We may improve this estimate using the `approximate conservation law' given by Eq.~\eqref{eq:Schw_spherical_mode}. We may simply integrate Eq.~\eqref{eq:Schw_spherical_mode} outwards from $v=0$ to obtain: 
\begin{align}
    \label{eq:estimate_1}
    |\partial_u(r\phi) - \partial_u(r\phi)|_{v=0}| \leq \int^v_0|\partial_v\partial_u(r\phi)|dv = \int^v_0|2mfr^{-2}\phi|dv \lesssim |u|^{-\frac{3}{2}} ,
\end{align}
so that $|\partial_u(r\phi)| \lesssim |u|^{-p} + |u|^{-\frac{3}{2}}$, and then integrate once more, inwards from $\scrim$: 
\begin{align}
    \label{eq:estimate_2}
    |r\phi - r\phi|_{u=-\infty}| \leq \int^u_{-\infty}|\partial_u(r\phi)|du \lesssim |u|^{1-p} + |u|^{-\frac{1}{2}} ,
\end{align}
so that $|r\phi| \lesssim |u|^{1-p} + |u|^{-\frac{1}{2}}$. If $p \leq \frac{3}{2}$, $|r\phi| \lesssim |u|^{1-p}$. Otherwise, iteratively inserting the improved estimate back into Eq.~\eqref{eq:estimate_1} $\lceil p-\frac{3}{2} \rceil$ times yields the estimate $|\partial_u(r\phi)| \lesssim |u|^{-p}$. With this estimate for outgoing radiation, we may derive the asymptotic form of ingoing radiation due to backscattering: 
\begin{align}
    |\partial_v(r\phi)| &\leq \int^u_{-\infty}|2mfr^{-2}\phi|du \lesssim |u|^{-1-p} .
\end{align}
The point I want to demonstrate here is that even though Eq.~\eqref{eq:Schw_spherical_mode} is not a conservation law, we were still able to find the asymptotic behaviour of $\phi$ because it is very close to being a conservation law at large $r$. If the right-hand side of Eq.~\eqref{eq:Schw_spherical_mode} decayed merely as $r^{-2}$ instead of as $r^{-3}$, then Eq.~\eqref{eq:estimate_1} would have yielded $|\partial_u(r\phi)| \lesssim |u|^{-\frac{1}{2}}$, so that $|\phi| \lesssim |u|^{-\frac{1}{2}}$ by Eq.~\eqref{eq:estimate_2}, which is not an improvement over the initial estimate $|\phi| \lesssim r^{-\frac{1}{2}}$. This is precisely the problem with the higher modes, for which the wave equation becomes
\begin{align}
    \partial_u\partial_v(r\phi) &= -\frac{\ell(\ell+1)f}{r^2}(r\phi) - \frac{2mf}{r^3}(r\phi) .
\end{align}
The problematic $r^{-2}$ term is also present in the flat wave equation. There, the solution is to instead integrate the \textit{NP conservation laws} \eqref{eq:flat_NP_laws}. On the Schwarzschild space-time, these conservation laws don't exist but nevertheless approximate conservation laws exist similar to Eq.~\eqref{eq:Schw_spherical_mode} for higher modes---see Eq. (7.10) of \cite{Kehrberger2021c}. These `generalized NP conservation laws' are much less compact than the flat NP conservation laws \eqref{eq:flat_NP_laws}, and the complexity escalates rather quickly when considering Kerr or more complicated space-times---see \S4 of \cite{Angelopoulos2021}. One of the purposes of this paper is to provide an algorithm for computing NP-like approximate conservation laws, like Eq.~\eqref{eq:Schw_spherical_mode} for fields of arbitrary spin in spherically symmetric space-times, including the Schwarzschild space-time, using the conformal GHP formalism. They generalize the approximately conserved quantities derived by Kehrberger for the scalar wave equation in \cite{Kehrberger2021c} and the Regge-Wheeler equation in \cite{Kehrberger2024a}. 

\subsection{Summary of the main results}
\subsubsection{Conformal GHP operators}
Here I define conformal GHP operators which I will need to state the main results. A significantly more detailed exposition will be given in \S\ref{sec:conformal_GHP}. For a short overview of the GHP formalism refer to Appendix \ref{appendix:GHP}. 

\begin{definition}
A function $\eta$ is said to be a conformal- and GHP-weighted function with weights $\{w, p, q\}$ if, under tetrad transformations of the form
\begin{align}
    (\hat{l}_a, \hat{n}_a, \hat{m}_a) &= (\Omega\lambda\bar{\lambda}l_a, \Omega\lambda^{-1}\bar{\lambda}^{-1}n_a, \Omega\lambda\bar{\lambda}^{-1}m_a) , \\
    \hat{\eta} &= \Omega^w\lambda^p\bar{\lambda}^q\eta .
\end{align}
\end{definition}
\noindent The spin coefficients $\kappa$, $\sigma$, $\rho - \bar{\rho}$, $\tau - \bar{\tau}'$, and their primed variants are all conformal with conformal weight $w=-1$. The remaining spin coefficients can be combined with the GHP operators to form conformal weighted GHP operators
\begin{subequations}
\begin{align}
    \th_c &:= \th + [w - \tfrac{1}{2}(p+q)]\rho - q(\rho - \bar{\rho}) \\
    \eth_c &:= \eth + [w - \tfrac{1}{2}(p-q)]\tau + q(\tau - \bar{\tau}') . 
\end{align}
\end{subequations}
$\th_c$ and $\eth_c$ both have conformal weight $w=-1$. \\
\\
The generalized wave equation I will be considering is
\begin{align}
    \label{eq:generalized_wave_eq}
    \Ro\phi_k &= 0 ,
\end{align}
where $\Ro := \th_c\th_c' - \eth_c\eth_c'$ is a weighted wave operator generalizing the standard wave operator $\Box$. When acting on quantities with weights $\{-1, 0, 0\}$, $2(\Ro - \Psi_2) = \Box + \tfrac{1}{6}R$. Eq.~\eqref{eq:generalized_wave_eq} contains many physically interesting wave equations, including the source-free Maxwell equations and the linearized vacuum Einstein equations. 

\subsubsection{The main theorems}
I am now in a position to state the main results of this paper. A covariant and conformal representation of the NP conservation laws \eqref{eq:flat_NP_laws} in Minkowski space is given by: 
\begin{theorem}
    Let $\phi_k$ be a scalar with weights $\{-s-1, 2(s-k), 0\}$ satisfying $\Ro\phi_k = 0$. Let $\mathcal{N}$ be a light cone in Minkowski space generated by $n^a$. Choose a foliation of $\mathcal{N}$ by spherical cuts $\mathcal{S}$ so that $\mathcal{N}$ and $\mathcal{S}$ are spherically symmetric with respect to a common centre. If $\mathcal{S}$ and $\mathcal{S}'$ are any two cuts belonging to this foliation, then
    \begin{align}
        \label{eq:cov_and_conf_expr}
        \oint_\mathcal{S}Y\th_c^{\ell-s+k+1}\phi_k\mathcal{S} &= \oint_{\mathcal{S}'}Y\th_c^{\ell-s+k+1}\phi_k\mathcal{S} , 
    \end{align}
    where $\mathcal{S}$ in the integrand denote the area forms on $\mathcal{S}$ and $\mathcal{S}'$, and where $Y = \eth_c^{\ell-s+k}Z$ is a weighted function with weights
    \begin{align}
        Y &: \{\ell+k,-\ell-s+k-1,-\ell+s-k-1\} . 
    \end{align}
    $Z$ satisfies $\th_c'Z = 0 = \eth_c'Z$. In terms of spin-weighted spherical harmonics, $Y$ consists of ${}_{k-s}Y_\ell$. 
\end{theorem}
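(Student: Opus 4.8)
The plan is to prove that the integral in \eqref{eq:cov_and_conf_expr} is independent of the cut by showing that its derivative along the generators of $\mathcal{N}$ vanishes identically. Since $\mathcal{N}$ is generated by $n^a$, the derivative that moves from one cut $\mathcal{S}$ to a neighbouring one is the conformal GHP operator $\th_c'$, so it is enough to prove $\th_c'\oint_\mathcal{S}Y\,\th_c^{\ell-s+k+1}\phi_k\,\mathcal{S}=0$; integrating this from $\mathcal{S}$ to $\mathcal{S}'$ then gives the stated equality. Because the conformal weights of the integrand and of the area form are chosen precisely so that $Y\,\th_c^{\ell-s+k+1}\phi_k\,\mathcal{S}$ is a properly weighted $2$-form on each cut, the cut-derivative of the integral equals the integral of $\th_c'$ applied to the integrand, up to a total $\eth_c$- or $\eth_c'$-divergence that integrates to zero over the closed, spherically symmetric cuts by the sphere analogue of Stokes' theorem. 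The whole argument therefore reduces to showing that $\th_c'(Y\,\th_c^{\ell-s+k+1}\phi_k)$ is itself such a total angular divergence.

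First I would integrate by parts on the sphere to strip the $\eth_c$'s off $Z$. Using that $\eth_c$ and $\eth_c'$ are adjoint (up to sign) on a closed cut and that $Y=\eth_c^{\ell-s+k}Z$, one obtains $\oint_\mathcal{S}Y\,\th_c^{\ell-s+k+1}\phi_k\,\mathcal{S}=(-1)^{\ell-s+k}\oint_\mathcal{S}Z\,(\eth_c')^{\ell-s+k}\th_c^{\ell-s+k+1}\phi_k\,\mathcal{S}$, which isolates $Z$ as a bare factor. Since $\th_c'Z=0$, the cut-derivative now lands entirely on the remaining factor, and the problem becomes showing that $\oint_\mathcal{S}Z\,\th_c'\bigl[(\eth_c')^{\ell-s+k}\th_c^{\ell-s+k+1}\phi_k\bigr]\,\mathcal{S}=0$.

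The engine driving this is the wave equation \eqref{eq:generalized_wave_eq}, which reads $\th_c\th_c'\phi_k=\eth_c\eth_c'\phi_k$, together with the conformal GHP commutators among $\th_c'$, $\th_c$, $\eth_c$ and $\eth_c'$. I would commute $\th_c'$ inward past the $(\eth_c')^{\ell-s+k}$ and the $\th_c^{\ell-s+k+1}$ so that it sits immediately in front of $\phi_k$ with one $\th_c$ still to its left; at that point the wave equation trades the pair $\th_c\th_c'$ for $\eth_c\eth_c'$, converting a longitudinal derivative into an angular one. Commuting the liberated $\eth_c$ back out to the front exhibits the leading term as a total $\eth_c$-divergence, and integrating it by parts transfers an $\eth_c'$ onto $Z$; since $Z$ is the edge harmonic with $\eth_c'Z=0$, this term vanishes. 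The remaining commutator terms carry strictly fewer factors of $\th_c$ and can be disposed of by the same mechanism, so the argument closes by induction on the power of $\th_c$.

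The main obstacle is the bookkeeping in this induction. In flat space the commutators $[\th_c',\th_c]$, $[\th_c',\eth_c']$ and $[\eth_c,\eth_c']$ reduce to multiplication by the simple Minkowski spin coefficients, but I must verify that when all of them are resolved every surviving term either is a genuine $\eth_c$-divergence or acquires an $\eth_c'$ acting directly on $Z$, with no anomalous, weight- or radius-dependent remainder escaping. The reason the count works is that $Z$ sits at the edge of its spin-weight tower, so that $Y=\eth_c^{\ell-s+k}Z$ realises ${}_{k-s}Y_\ell$, and the single extra $\eth_c'$ delivered onto $Z$ after the wave equation's angular derivative is integrated by parts is exactly what pushes $Z$ past the edge, where $\eth_c'Z=0$. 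Checking that the precise powers $\th_c^{\ell-s+k+1}$ and $(\eth_c')^{\ell-s+k}$ make the telescoping terminate, rather than generate an ever-growing cascade of commutator terms, and that every intermediate expression remains well-weighted so each integration by parts is legitimate, is where the care is concentrated.
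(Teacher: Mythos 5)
Your overall strategy---differentiate along the generators of $\mathcal{N}$, use the wave equation to trade $\th_c\th_c'$ for $\eth_c\eth_c'$, and kill the resulting angular term against $Z$ using $\eth_c'Z=0$---is the same as the paper's, but there are two concrete problems. First, the integration-by-parts identity you start from is wrong: by Lemma~\ref{lemma:integration_by_parts} (with $\tau-\bar{\tau}'=0$ in this setting), moving an $\eth_c$ off $Y=\eth_c^{\ell-s+k}Z$ produces $-\eth_c$, not $-\eth_c'$, acting on the other factor, so the correct identity is $\oint_\mathcal{S} Y\th_c^{\ell-s+k+1}\phi_k\,\mathcal{S}=(-1)^{\ell-s+k}\oint_\mathcal{S} Z\,\eth_c^{\ell-s+k}\th_c^{\ell-s+k+1}\phi_k\,\mathcal{S}$. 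Your version with $(\eth_c')^{\ell-s+k}$ is not even weight-consistent: the integrand $Z(\eth_c')^{\ell-s+k}\th_c^{\ell-s+k+1}\phi_k$ has GHP weights $\{2(s-k-\ell),\,2(\ell-s+k)\}$ rather than the $\{0,0\}$ required of a surface integrand. The same confusion recurs when you say that integrating the leading total $\eth_c$-divergence by parts ``transfers an $\eth_c'$ onto $Z$''; it would transfer an $\eth_c$, which does \emph{not} annihilate $Z$. In the correct computation the wave equation leaves $\eth_c'\eth_c(\cdots)$ with $\eth_c'$ outermost, and one simply writes $Z\eth_c'(\cdots)=\eth_c'(Z\cdots)$ because $\eth_c'Z=0$; no integration by parts is needed at that stage.

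Second, the entire combinatorial content of your argument is deferred to an induction you do not carry out, and you flag a possible ``ever-growing cascade of commutator terms'' as the main obstacle. There is no cascade: in Minkowski space with the spherically symmetric, shear-free tetrad adapted to the two light cones, all conformal spin coefficients and curvature scalars vanish, so $[\th_c,\eth_c]=[\th_c,\eth_c']=[\th_c',\eth_c]=[\th_c',\eth_c']=0$ and $\Ro=\Ro'$ on $q=0$ quantities. The paper packages this into the single observation $[\Ro,\eth_c\th_c]=0$ (Eq.~\eqref{eq:raising_operator_commutes}), so that $(\eth_c\th_c)^{\ell-s+k}\phi_k$ is itself a solution of the wave equation and the higher-$\ell$ case collapses in one line to the base case: $\th_c'\bigl[Z\eth_c^{\ell-s+k}\th_c^{\ell-s+k+1}\phi_k\bigr]=Z(\Ro'+\eth_c'\eth_c)(\eth_c\th_c)^{\ell-s+k}\phi_k=\eth_c'\bigl[Z\eth_c(\eth_c\th_c)^{\ell-s+k}\phi_k\bigr]$, after which Stokes' theorem \eqref{eq:conformal_stokes} finishes the proof. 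Until you identify and verify these vanishing commutators (or the equivalent lemma $[\Ro,\eth_c\th_c]=0$), the telescoping you appeal to is asserted rather than proved, so the proposal as written has a genuine gap.
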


\noindent Since the expression \eqref{eq:cov_and_conf_expr} is conformal, this conservation law naturally also holds in any conformally flat space-time. \\
\\
The next result gives a necessary and sufficient condition for the existence of conservation laws of the form \eqref{eq:cov_and_conf_expr} with $\ell = s-k$, for \textit{some} function $Y$, on an arbitrary curved background.

\begin{theorem}
Associated to solutions $\phi_k$ to the spin-$s$ wave equation $(\Ro - V)\phi_k = 0$, where $V$ is an arbitrary function, there exist conserved quantities on a null hypersurface $\mathcal{N}$ generated by $l^a$ of the form
\begin{align}
    \label{eq:Aretakis_like_charge}
    \oint_\mathcal{S}\omega\th_c'\phi_k\mathcal{S}
\end{align}
if and only if there exists a function $\omega$ with weights $\{s, 2(k-s)+1, 1\}$ satisfying $\th_c\omega = 0$ and the elliptic differential equation
\begin{align}
\label{eq:elliptic_pde}
    \bigl[\eth_c'(\eth_c - (\tau - \bar{\tau}')) + V\bigr]\omega &= 0 .
\end{align}
\end{theorem}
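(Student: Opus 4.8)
The plan is to characterize conservation as the statement that the charge \eqref{eq:Aretakis_like_charge} takes the same value on every cut of the chosen foliation of $\mathcal{N}$, and then to convert this into a pointwise condition on $\omega$. First I would establish a conformal divergence theorem on $\mathcal{N}$: since the integrand $\omega\th_c'\phi_k$ carries weights $\{-2,0,0\}$ and the area form carries weights $\{2,0,0\}$, their product is a genuine tetrad- and conformally-invariant two-form, and the difference of the charge between two cuts $\mathcal{S}$ and $\mathcal{S}'$ equals the integral over the intervening region of $\mathcal{N}$ of $\th_c(\omega\th_c'\phi_k)$ against the weighted null three-volume element. The weight count is consistent because $\th_c$ lowers the conformal weight by one while the three-volume element on $\mathcal{N}$ carries conformal weight three, one unit more than the area form. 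Conservation for every pair of cuts is therefore equivalent to the vanishing of $\oint_\mathcal{S}\th_c(\omega\th_c'\phi_k)\,\mathcal{S}$ on each cut.

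Next I would expand the integrand with the Leibniz rule, writing $\th_c(\omega\th_c'\phi_k) = (\th_c\omega)\th_c'\phi_k + \omega\,\th_c\th_c'\phi_k$. Imposing $\th_c\omega = 0$ annihilates the first term; I will argue below that this condition is in fact forced. For the second term I substitute the field equation in the form $\th_c\th_c'\phi_k = \eth_c\eth_c'\phi_k + V\phi_k$, which follows from $(\Ro-V)\phi_k=0$ together with $\Ro = \th_c\th_c' - \eth_c\eth_c'$. This replaces the transverse second derivative by operators $\eth_c,\eth_c'$ that are tangent to the cut, leaving the integrand $\omega\,\eth_c\eth_c'\phi_k + V\omega\phi_k$.

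The heart of the argument is then to integrate by parts twice around the compact, boundaryless cut $\mathcal{S}$, moving both tangential derivatives off $\phi_k$ and onto $\omega$. Each integration by parts for $\eth_c$ (resp.\ $\eth_c'$) against the weighted area form produces connection terms built from the spin coefficients $\tau$ and $\tau'$, because $m^a$ is not divergence-free on $\mathcal{S}$; carefully collecting these terms and using the commutator $[\eth_c,\eth_c']$ to fix the operator ordering should assemble the adjoint into $\eth_c'(\eth_c - (\tau-\bar\tau'))$, yielding
\begin{align}
  \oint_\mathcal{S}\bigl\{[\eth_c'(\eth_c - (\tau-\bar\tau')) + V]\omega\bigr\}\phi_k\,\mathcal{S} = 0 .
\end{align}
For the forward direction, if $\omega$ satisfies both $\th_c\omega=0$ and the elliptic equation \eqref{eq:elliptic_pde}, the bracket vanishes identically and the charge is conserved. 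For the converse I would invoke the freedom of characteristic data on $\mathcal{N}$: prescribing $\phi_k|_\mathcal{S}=0$ on a cut (so that all tangential derivatives and $V\phi_k$ vanish there) while letting the transverse derivative $\th_c'\phi_k|_\mathcal{S}$ range freely forces $\th_c\omega=0$ from the Leibniz term; then, with $\th_c\omega=0$ in hand, letting $\phi_k|_\mathcal{S}$ range freely forces the bracket, i.e.\ the elliptic equation, to vanish pointwise. Because $\eth_c'\eth_c$ is the Laplacian on the two-sphere cut up to lower-order terms, the operator in \eqref{eq:elliptic_pde} is elliptic, justifying the name.

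The main obstacle is the bookkeeping in the third step: getting the spin-coefficient connection terms and the operator ordering exactly right so that the double integration by parts reproduces precisely $\eth_c'(\eth_c - (\tau-\bar\tau'))$ rather than a superficially different operator, and checking that the conformal weights render every intermediate term consistent. Establishing the conformal divergence theorem on $\mathcal{N}$ in manifestly weighted form is the other technical ingredient. As a consistency check, in the conformally flat setting the elliptic equation \eqref{eq:elliptic_pde} should admit a solution corresponding to the lowest spin-weighted harmonic ${}_{k-s}Y_{s-k}$, recovering the $\ell=s-k$ instance of Theorem~1 under the interchange of $l^a$ and $n^a$.
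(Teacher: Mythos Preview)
Your proposal is essentially the paper's proof: expand $\th_c(\omega\th_c'\phi_k)$ by Leibniz, replace $\th_c\th_c'\phi_k$ via the field equation, and move the two tangential derivatives from $\phi_k$ onto $\omega$. The paper does this pointwise---rewriting $\omega\eth_c\eth_c'\phi_k$ as a sum of total $(\eth_c-(\tau-\bar\tau'))$- and $\eth_c'$-derivatives plus $\phi_k[\eth_c'(\eth_c-(\tau-\bar\tau'))+V]\omega$ and then invoking Stokes' theorem---so no commutator $[\eth_c,\eth_c']$ is ever needed and the adjoint appears directly in the stated order; your sketch of the converse via free characteristic data actually goes beyond the paper, which writes out only the forward implication.
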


\noindent A similar result was first obtained by Aretakis in the special case of the scalar wave equation $\Box\phi=0$, i.e. in the case $s=k=0$ and $V = \Psi_2 + 
2\Lambda$ \cite{Aretakis2013,Aretakis2014}. In this case, Aretakis also proved that solutions to equation \eqref{eq:elliptic_pde} exist on extremal Killing horizons. I will state and prove his result using the conformal GHP formalism. 

\begin{theorem}[Aretakis \cite{Aretakis2013}]
\label{thm:Aretakis2013}
Let $\phi$ be a solution to the wave equation $\Box\phi = 0$, and let $\mathcal{H}$ be an extremal Killing horizon---a null hypersurface generated by a Killing vector field $k^a \overset{\mathcal{H}}{=} kl^a$ satisfying the extremality condition $k^b\nabla_bk^a \overset{\mathcal{H}}{=} 0$. Let $\mathcal{H}$ be foliated by closed space-like cross-sections $\mathcal{S}$ whose intrinsic metric $2m_{(a}\bar{m}_{b)}$ satisfies $\mathcal{L}_k(2m_{(a}\bar{m}_{b)}) \overset{\mathcal{H}}{=} 0$. Then there exists a positive function $\omega$ on $\mathcal{H}$, which is unique up to a factor, such that
\begin{align}
    \oint_\mathcal{S}\omega\th_c'\phi\mathcal{S}
\end{align}
is independent of the cross section $\mathcal{S}$. 
\end{theorem}

\noindent Even in the absence of conservation laws, Newman-Penrose-like approximate conservation laws can still be a useful tool to obtain the asymptotic structure of solutions to wave equations à la Kehrberger. In this paper I only deal with the spherically symmetric case in detail although I hope the methods may be aplied to more general space-times. 

\begin{theorem}
\label{thm:spherical_NP_flux_balance}
Let $\phi_k$ satisfy $(\Ro - V)\phi_k = 0$. On non-flat spherically symmetric null hypersurfaces $\mathcal{N}$, the NP conservation laws \eqref{eq:cov_and_conf_expr} become the flux-balance laws
\begin{multline}
    \label{eq:NP_flux_balance_spherical}
    \oint_{\mathcal{S}'}Y\th_c^{\ell-s+k+1}\phi_k\mathcal{S} - \oint_\mathcal{S}Y\th_c^{\ell-s+k+1}\phi_k\mathcal{S} \\
    = \int_\Sigma Y\Bigl(\th_c^{\ell-s+k}(V\phi_k) + \sum_{n=0}^{\ell-s+k}6(n+s-k)\th_c^{\ell-s+k-n}\bigl[\Psi_2\th_c^n\phi_k\bigr]\Bigr)\mathcal{N} ,
\end{multline}
where $\Sigma \subset \mathcal{N}$ has future boundary $\mathcal{S}'$ and past boundary $\mathcal{S}$. 
\end{theorem}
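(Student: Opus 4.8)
The plan is to realise the left-hand side as the boundary term of a Stokes-type identity on $\mathcal{N}$ and then to compute the resulting bulk integrand explicitly. First I would establish a divergence lemma on a spherically symmetric null hypersurface: for any weighted scalar $\Phi$ of weights $\{-2,0,0\}$, so that $\Phi\mathcal{S}$ is weight-free and $\oint_\mathcal{S}\Phi\mathcal{S}$ is well defined,
\[
    \oint_{\mathcal{S}'}\Phi\mathcal{S} - \oint_\mathcal{S}\Phi\mathcal{S} = \int_\Sigma(\th_c'\Phi)\mathcal{N} .
\]
This follows by foliating $\mathcal{N}$ by the generators of $n^a$, parametrised by $\lambda$, and using that the area element obeys $\mathcal{L}_n\mathcal{S} = -(\rho'+\bar\rho')\mathcal{S}$. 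The key observation is that the conformal correction built into $\th_c'$ acting on a $\{-2,0,0\}$ scalar is exactly $-2\rho'\Phi$, which in spherical symmetry (where $\rho'$ is real) equals $-(\rho'+\bar\rho')\Phi$; hence $\tfrac{d}{d\lambda}\oint\Phi\mathcal{S} = \oint(\th_c'\Phi)\mathcal{S}$, and integrating in $\lambda$ gives the lemma. It is precisely the conformal weight $w=-2$ of the integrand that makes $\th_c'$ the correct total derivative along $\mathcal{N}$. The weight count using the weights of $\phi_k$ and of $Y$ confirms that $\Phi = Y\th_c^{\ell-s+k+1}\phi_k$ has weights $\{-2,0,0\}$ and that $\th_c'\Phi$ carries the weights $\{-3,-1,-1\}$ needed to pair with $\mathcal{N}$.

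With the lemma in hand I would expand $\th_c'\Phi$. Writing $N := \ell-s+k+1$ and applying the Leibniz rule gives $\th_c'\Phi = (\th_c'Y)\th_c^N\phi_k + Y\,\th_c'\th_c^N\phi_k$, and commuting $\th_c'$ rightward through the tower of $\th_c$'s yields $\th_c'\th_c^N\phi_k = \th_c^N\th_c'\phi_k + \sum_{j=0}^{N-1}\th_c^{j}[\th_c',\th_c]\th_c^{N-1-j}\phi_k$. On the leading term I would invoke the field equation in the form $\th_c\th_c'\phi_k = V\phi_k + \eth_c\eth_c'\phi_k$ to obtain $\th_c^N\th_c'\phi_k = \th_c^{N-1}(V\phi_k) + \th_c^{N-1}\eth_c\eth_c'\phi_k$; the first piece is exactly the $V$-term of the flux law with $N-1 = \ell-s+k$.

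For the commutator sum I would use the conformal GHP commutator, which in spherical symmetry collapses to $[\th_c',\th_c]\eta = 3(p+q)\Psi_2\eta$ on a quantity of weights $\{w,p,q\}$, all shear-, twist-, and $\tau$-dependent pieces of the general relation vanishing by spherical symmetry and $\Psi_2$ being real. Since $\th_c^{n}\phi_k$ carries $p+q = 2(n+s-k)$, the substitution $n = N-1-j$ reproduces both the coefficient $6(n+s-k)$ and the summand $\th_c^{\ell-s+k-n}[\Psi_2\th_c^n\phi_k]$ of Eq.~\eqref{eq:NP_flux_balance_spherical}. It then remains to dispose of the two purely angular leftovers $(\th_c'Y)\th_c^N\phi_k$ and $Y\th_c^{N-1}\eth_c\eth_c'\phi_k$ by showing their sphere integral vanishes. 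I would use that $Y=\eth_c^{\ell-s+k}Z$ is $\th_c'$-annihilated in spherical symmetry (since $[\th_c',\eth_c]=0$ there and $\th_c'Z=0$), killing the first term, and integrate by parts on each closed cut to dispose of the second, commuting $\eth_c,\eth_c'$ through $\th_c^{N-1}$ and invoking $\eth_c'Z=0$. The structural point is that every cut of a spherically symmetric hypersurface is a conformally round sphere, so $\eth_c,\eth_c'$ and $[\eth_c',\eth_c]$ are exactly those of flat space; this sphere integral is therefore governed by the flat identity of the first theorem (the case $V=\Psi_2=0$) and vanishes with no curvature contribution of its own.

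The main obstacle I anticipate is the commutator bookkeeping of the third paragraph: establishing the precise identity $[\th_c',\th_c]\eta = 3(p+q)\Psi_2\eta$ in the conformal GHP calculus and pinning down the overall factor (consistent with the normalisation $2(\Ro-\Psi_2)=\Box+\tfrac{1}{6}R$), together with verifying that the general-background correction terms---and any curvature lurking in $[\eth_c',\eth_c]$---genuinely drop out under spherical symmetry, so that the entire $\Psi_2$ dependence is isolated in the radial--temporal commutator and the angular remainder reduces verbatim to the flat identity. Once the commutator coefficient and the vanishing of the angular remainder are secured, assembling the $V$-term, the $\Psi_2$-sum, and the Stokes lemma yields Eq.~\eqref{eq:NP_flux_balance_spherical}.
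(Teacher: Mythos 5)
Your overall architecture---Stokes' theorem along the generators of $\mathcal{N}$, the Leibniz rule, commuting $\th_c'$ through the tower of $\th_c$'s, and the field equation on the innermost factor---mirrors the paper's, and your final coefficient bookkeeping ($p+q=2(n+s-k)$ on $\th_c^n\phi_k$, giving $6(n+s-k)$) does land on the stated formula. However, the two identities on which your third and fourth paragraphs rest are both false, and the argument only closes because their errors cancel---a cancellation you have not exhibited and in fact explicitly deny. In spherical symmetry the commutator of the radial operators alone, on a quantity of weights $\{w,p,q\}$, is
\begin{align}
    [\th_c',\th_c]\eta &= (p+q)\bigl(3\Psi_2 + K\bigr)\eta , & K &= -\rho\rho' - \Psi_2 + \Phi_{11} + \Lambda ,
\end{align}
not $3(p+q)\Psi_2\eta$: the $\rho\rho'$ cross terms together with the GHP equation $\th'\rho = \rho\rho' - \Psi_2 - 2\Lambda$ leave behind the Gaussian curvature $K$ of the cuts. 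Correspondingly, your ``purely angular leftover'' does \emph{not} integrate to zero on each cut: after integration by parts it becomes $\oint\th_c^{N-1}\phi_k\,\eth_c'\eth_cY\,\mathcal{S}$, and $\eth_c'\eth_cY=\eth_c'\eth_c^{N}Z$ is nonzero precisely because $[\eth_c',\eth_c]=(p-q)K$ obstructs pushing $\eth_c'$ onto $Z$---equivalently, ${}_{k-s}Y_\ell$ is not annihilated by $\eth'\eth$ except at the minimal $\ell$. (The same is true in flat space; in the proof of the flat theorem the angular piece is not zero but a total $\eth_c'$ derivative absorbed into the $\nu_1$ slot of Stokes' theorem.) The two $K$-contributions cancel against each other, but as written you assert that each vanishes separately, which a direct check refutes. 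A quick consistency test: the genuine coefficient of $\Psi_2$ in $[\th_c',\th_c]$ is $2(p+q)$, not $3(p+q)$, so even the $\Psi_2$ part of your sum would come out as $4(n+s-k)$ rather than $6(n+s-k)$ without the angular contribution.

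The paper sidesteps this entirely by never splitting $\th_c'\th_c$ from $\eth_c'\eth_c$: it keeps $\Ro'=\th_c'\th_c-\eth_c'\eth_c$ intact, uses the vanishing of all four mixed commutators $[\th_c,\eth_c]$, $[\th_c,\eth_c']$, $[\th_c',\eth_c]$, $[\th_c',\eth_c']$ to obtain the exact operator identity $\Ro\,\eth_c\th_c=\eth_c\th_c\,\Ro'$, and then only ever needs the difference $\Ro-\Ro'=-3p\Psi_2$ (on $q=0$ quantities), in which $K$ has already cancelled. To repair your argument, either regroup in the same way, or carry the $(p+q)K$ term of $[\th_c',\th_c]$ and the $K$ terms generated by $\eth_c'\eth_c^{N}Z$ explicitly and exhibit their cancellation under the cut integrals. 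Your first-paragraph divergence lemma is sound (it is the paper's conformal Stokes theorem with $\mu_1=\nu_1=0$), but note that the clean curved-space form of the charge is $\oint Z\th_c(\eth_c\th_c)^{\ell-s+k}\phi_k\mathcal{S}$, with $Y=(-\eth_c)^{\ell-s+k}Z$ produced only at the end by integration by parts on the cut.
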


\subsection{Future directions}
\subsubsection{Aretakis-like charges for higher spin fields on extremal horizons}
It is natural to ask if theorem \ref{thm:Aretakis2013} can be generalized to fields of higher spin. The proof of theorem \ref{thm:Aretakis2013} uses the fact that in the case $s=k=0$, the elliptic operator $\mathcal{O}$ defined by \eqref{eq:elliptic_pde} is real so that it has a unique (up to a factor) positive principal eigenfunction. In general, however, $\mathcal{O}$ is complex. Another interesting case for which $\mathcal{O}$ is real is when $k = s$, but even in this case I was not able to prove that solutions to equation \eqref{eq:elliptic_pde} exist on extremal horizons when $s\neq0$ for any interesting choices of $V$, and I suspect that generally there are none. There is evidence that suggests that conserved charges of the form \eqref{eq:Aretakis_like_charge} exist when $k = 0$. The existence of conserved electromagnetic and gravitational charges was shown by Lucietti and Reall \cite{Lucietti2012} on extremal Kerr horizons, and by Murata \cite{Murata2013} on some other examples of extremal horizons in higher dimensions. In both of these works, conserved charges were found of the form \eqref{eq:Aretakis_like_charge} with $k=0$. Furthermore, Lucietti and Reall found conserved charges of the form
\begin{align}
    \oint\omega\th_c'^{2s+1}\phi_{2s}\mathcal{S} ,
\end{align}
for integer $s$ (cf. Eq.~\eqref{eq:cov_and_conf_expr} with $k=2s$, $\ell = k - s$), and also higher order conserved charges of the form
\begin{align}
    \oint\omega\th_c'^{\ell\pm s+1}\phi_{s\pm s}\mathcal{S} ,
\end{align}
for integer $s$ (cf. Eq.~\eqref{eq:cov_and_conf_expr} with $k=0,2s$). Earlier, higher order conserved charges associated to solutions to the scalar wave equation on extremal horizons in spherical symmetry were also found by Aretakis \cite{Aretakis2012}. It would be nice to obtain the conditions for which conservation laws exist for fields of arbitrary spin, and to find a unifying expression for the conserved charges including the ones found by Aretakis, Lucietti and Reall. 

\subsubsection{Newman-Penrose-like approximate conservation laws in general asymptotically flat space-times}
Spherically symmetric space-times are some of the simplest curved space-times. The ultimate goal is to obtain NP-like approximate conservation laws in more complicated space-times like Kerr. The Kerr space-time is Petrov type D, like spherically symmetric space-times. Unlike in spherically symmetry, however, the principal null directions in Kerr do not generate null hypersurfaces. A problem that arises on shearing null hypersurfaces is that there are no functions $Z$ that simultaneously satisfy $\th_c'Z = 0$ and $\eth_c'Z = 0$ (cf. Eq.~\eqref{eq:integrability_condition}). This means that the function $Y$ as defined in theorem \ref{thm:covariant_NP_conservation_law} is not a suitable definition in general curved space-times. \\
\\
Nevertheless, there are some indications that suggest that a conformal approach may work for general asymptotically flat space-times. In the Kerr space-time, the conformal spin coefficients fall off faster than their non-conformal counterparts. For example, in Boyer-Lindquist coordinates, in the Kinnersley tetrad, they are given by
\begin{subequations}
\begin{align}
    \rho &= -\frac{1}{r-ia\cos\theta} & \implies \rho &= O(r^{-1}) , & \rho - \bar{\rho} &= O(r^{-2}) , \\
    \rho' &= \frac{r^2-2Mr+a^2}{2(r+ia\cos\theta)(r-ia\cos\theta)^2} & \implies \rho' &= O(r^{-1}) , & \rho' - \bar{\rho}' &= O(r^{-2}) , \\
    \tau &= -\frac{ia\sin\theta}{\sqrt{2}(r^2+a^2\cos^2\theta)} & \implies \tau &= O(r^{-2}) , \\
    \tau' &= -\frac{ia\sin\theta}{\sqrt{2}(r-ia\cos\theta)^2} & \implies \tau' &= O(r^{-2}) , & \tau - \bar{\tau}' &= O(r^{-3}) . 
\end{align}
\end{subequations}
In each case, the conformal spin coefficients fall off by an extra power of $r$. The fall-off rate is the same in Bondi coordinates. In fact, in Bondi coordinates, in the Newman-Unti tetrad, the fall-off rates of the conformal spin coefficients in a general asymptotically flat space-time are given by (see \S9.8 of \cite{Penrose1986}, \cite{Newman1962})
\begin{align}
    \rho - \bar{\rho} &= O(r^{-2}) , & \rho' - \bar{\rho}' &= O(r^{-2}) , & \tau - \bar{\tau}' &= O(r^{-3}) ,
\end{align}
which are the same as the fall-off rates in Kerr. Hence, conformally defined generalized NP constants will satisfy flux balance laws whose flux tends to fall off faster, making such a conformal strategy promising. 

\subsection{Outline of the paper}
The structure of the paper is as follows: in \S\ref{sec:conformal_GHP} I describe the conformal extension to the GHP formalism, which will allow us to cast the familiar Minkowski space NP conservation laws in a manifestly covariant and conformally invariant form in \S\ref{sec:NP_constants_in_conformally_flat_spacetime}. In \S\ref{sec:Aretakis_charge}, I study Aretakis charges and exact conservation laws on null hypersurfaces. Finally, I discuss a straightforward extension of the flat NP constants to spherically symmetric space-times in \S\ref{sec:NP_constants_in_Schwarzschild}. For a short overview of the GHP formalism, refer to Appendix \ref{appendix:GHP}.

\setcounter{theorem}{0}

\section{Conformal extension of the GHP formalism}
\label{sec:conformal_GHP}

In this section I describe a conformal extension to the GHP formalism \cite{Geroch1973,Penrose1985,Penrose1986} which is different from earlier conformal extensions (cf. \cite{Penrose1985,Ludwig1988}). In anticipation of further applications to more general classes of space-times like the Kerr space-time, the formalism put forth here was designed to greatly simplify a number of field equations and commutators in the Kerr space-time. 

\subsection{Conformal weighted thorn and eth}

The NP formalism has several advantages. It achieves enormous notational economy by working with complex numbers. The 24 independent components of the connection are described by 12 complex spin coefficients, and the Weyl curvature is given by 5 complex scalars, which describe the 10 independent components of the Weyl tensor (which has 252 components that are related by complicated symmetries). Furthermore, the spin coefficients come in pairs which get exchanged when exchanging the two real null tetrad vectors, effectively halving the number of equations again. \\
\\
The GHP formalism further improves on the NP formalism by fixing the two real null directions and casting the NP equations into a form which is invariant under a change of tetrad leaving the two real null directions invariant. Four spin coefficients are absorbed into four invariant differential operators. If a space-time contains a pair of null directions with physical significance, the spin coefficients and curvature scalars obtain a clear geometric interpretation. \\
\\
For all of these reasons, it is no surprise that the GHP formalism has proven to be incredibly useful in describing black holes, since they are Petrov type D and thus single out two null directions. Choosing a tetrad along the principal null directions in a vacuum type D space-time leaves only $\rho$, $\rho'$, $\tau$, $\tau'$ and $\Psi_2$ as non-vanishing GHP spin coefficients and curvature scalars. \\
\\
When applied to conformal equations, such as the massless free-field equations (like the Teukolsky equation), there is a further simplification which effectively cuts the number of spin coefficients in half yet again. One may define \textit{conformal GHP operators}
\begin{subequations}
\label{eq:conformal_GHP}
\begin{align}
    \th_c &:= \th + [w - \tfrac{1}{2}(p+q)]\rho - q(\rho - \bar{\rho}) , & \th'_c &:= \th' + [w + \tfrac{1}{2}(p+q)]\rho' + q(\rho' - \bar{\rho}') , \\
    \eth_c &:= \eth + [w - \tfrac{1}{2}(p-q)]\tau + q(\tau - \bar{\tau}') , & \eth'_c &:= \eth' + [w + \tfrac{1}{2}(p-q)]\tau' + q(\bar{\tau} - \tau') . 
\end{align}
\end{subequations}
These conformal operators are conformal weighted operators acting on functions with conformal weight $w$. A function $\eta$ has conformal weight $w$ if under a conformal rescaling of the metric $\hat{g}_{ab} = \Omega^2g_{ab}$, $\hat{\eta} = \Omega^w\eta$. The tetrad is chosen to rescale as
\begin{align}
    (\hat{l}_a, \hat{n}_a, \hat{m}_a) = (\Omega l_a, \Omega n_a, \Omega m_a) ,
\end{align}
although any other rescaling compatible with the metric rescaling $\hat{g}_{ab} = \Omega^2g_{ab}$ could have been considered, for example $\hat{l}_a, \hat{n}_a, \hat{m}_a = \Omega^2l_a, n_a, \Omega m_a$. Since all possible tetrad rescalings are related to each other through a boost, and the GHP formalism is invariant under boosts, the particular choice is not important. The conformal GHP operators all have conformal weight $w=-1$, so that, for example, $\widehat{\th_c\eta} = \Omega^{w-1}\th_c\eta$. The remaining spin coefficients can all be written in terms of the Lie derivative, which is independent of the metric, as
\begin{subequations}
\label{eq:conformal_spin_coefficients}
\begin{align}
    \kappa &= l_a\mathcal{L}_ml^a \\
    \sigma &= m_a\mathcal{L}_ml^a \\
    \rho - \bar{\rho} &= l_a\mathcal{L}_m\bar{m}^a \\
    \tau - \bar{\tau}' &= m_a\mathcal{L}_nl^a , 
\end{align}
\end{subequations}
or alternatively using Cartan's equations (cf. \S4.13 of \cite{Penrose1985})
\begin{subequations}
\begin{align}
    \mathbf{l} \wedge d\mathbf{l} &= (\rho - \bar{\rho})\mathbf{l} \wedge \mathbf{m} \wedge \mathbf{\bar{m}} + \kappa\mathbf{l} \wedge \mathbf{\bar{m}} \wedge \mathbf{n} + \bar{\kappa}\mathbf{l} \wedge \mathbf{m} \wedge \mathbf{n} \\
    \mathbf{m} \wedge d\mathbf{m} &= \sigma\mathbf{m} \wedge \mathbf{\bar{m}} \wedge \mathbf{n} + \bar{\sigma}'\mathbf{m} \wedge \mathbf{\bar{m}} \wedge \mathbf{l} + (\tau - \bar{\tau}')\mathbf{m} \wedge \mathbf{n} \wedge \mathbf{l} , 
\end{align}
\end{subequations}
which are clearly conformal, since, for example, $\hat{\mathbf{l}}\wedge d\hat{\mathbf{l}} = \Omega\mathbf{l}\wedge (\Omega d\mathbf{l} + \mathbf{l}\wedge d\Omega) = \Omega^2\mathbf{l} \wedge d\mathbf{l}$. Hence it is easy to see that the coefficients \eqref{eq:conformal_spin_coefficients} (and their primed variants) are all conformal, with conformal weights $w=-1$. In vacuum type D space-times, the only non-vanishing conformal spin coefficients and curvature scalars are $\rho - \bar{\rho}$, $\rho' - \bar{\rho}'$, $\tau - \bar{\tau}'$ and $\Psi_2$. 
\begin{remark}
    The conformal GHP operators \eqref{eq:conformal_GHP} are far from unique. For example, the operators $\th_c + wf + pg + qh$, where $f$, $g$, and $h$ are weighted scalars with weights $\{w,p,q\} = \{-1,1,1\}$ are also conformal weighted GHP operators. For example, Penrose \cite{Penrose1985} defines
    \begin{align}
        \th_{c,\mathrm{Penrose}} &:= \th + [w - \tfrac{1}{2}(p+q)]\rho , & \eth_{c,\mathrm{Penrose}} &:= \eth + [w - \tfrac{1}{2}(p-q)]\tau ,
    \end{align}
    and Ludwig \cite{Ludwig1988} defines
    \begin{align}
        \th_{c,\mathrm{Ludwig}} &:= \th + \tfrac{1}{2}[w - \tfrac{1}{2}(p+q)](\rho + \bar{\rho}) , & \eth_{c,\mathrm{Ludwig}} &:= \eth + \tfrac{1}{2}[w - \tfrac{1}{2}(p-q)](\tau + \bar{\tau}') ,
    \end{align}
    which have the nice properties $\th_{c,\mathrm{Ludwig}} = \overline{\th_{c,\mathrm{Ludwig}}}$ and $\eth_{c,\mathrm{Ludwig}} = \overline{\eth'_{c,\mathrm{Ludwig}}}$. 
\end{remark}

\noindent The reason for choosing the particular operators \eqref{eq:conformal_GHP} is that it greatly simplifies a large number of field equations and commutators. The GHP equations not involving the Ricci curvature are conformal, and are given by
\begin{subequations}
\begin{align}
    \th_c(\rho - \bar{\rho}) &= \eth_c'\kappa - \eth_c\bar{\kappa} , \\
    \th_c\sigma - \eth_c\kappa &= \Psi_0 \label{eq:psi0} , \\
    \eth_c(\tau - \bar{\tau}') &= \th_c'\sigma - \th_c\bar{\sigma}' .
\end{align}
\end{subequations}
The commutator $[\th_c, \eth_c]$ is given by
\begin{multline}
    \label{eq:commutator}
    [\th_c, \eth_c] = \sigma\eth_c' - \kappa\th_c' - [w - \tfrac{1}{2}(p+q)]\eth_c'\sigma + [w - \tfrac{1}{2}(p-q)]\th_c'\kappa + 2w\Psi_1 \\
    + q\Bigl[\kappa(\rho' - \bar{\rho}') - \sigma(\bar{\tau} - \tau') - \bar{\kappa}\sigma' + (\th_c + \rho - \bar{\rho})(\tau - \bar{\tau}') + (\eth_c + \tau - \bar{\tau}')(\rho - \bar{\rho})\Bigr] .
\end{multline}
The right-hand side only had \textit{two} derivative operators, while $[\th, \eth] = \sigma\eth' - \kappa\th' + \rho\eth - \bar{\tau}'\th + \dots$ has four! The commutators $[\th_c, \th_c']$ and $[\eth_c, \eth_c']$ are more complicated than their GHP counterpart and involve the Ricci curvature. Their difference, however, is simple and is given by
\begin{multline}
\label{eq:wave_commutator}
    [\th_c, \th_c'] - [\eth_c, \eth_c'] = p\Bigl[2\sigma\sigma' - 2\kappa\kappa' - 3\Psi_2\Bigr] + q\Bigl[\th_c'(\rho - \bar{\rho}) + \th_c(\rho' - \bar{\rho}') - \eth_c'(\tau - \bar{\tau}') + \eth_c(\bar{\tau} - \tau') \\
    - (\tau - \bar{\tau}')(\bar{\tau} - \tau') + \sigma\sigma' - \bar{\sigma}\bar{\sigma}' - \kappa\kappa' - \bar{\kappa}\bar{\kappa}' + \eth'\tau + \eth\tau' + \rho\rho' - \bar{\rho}\bar{\rho}' - 2\Phi_{11} - 2\Lambda\Bigr] . 
\end{multline}
The Bianchi identities are not conformal, since they involve the Ricci curvature. Curiously, however, the vacuum Bianchi identities \textit{can} be cast in a conformally invariant form by writing $\psi_k := \Omega^{-1}\Psi_k$, where $\psi_k$ has conformal weight $w = -3$ (cf. \cite{Penrose1985,Penrose1986})
\begin{subequations}
\label{eq:Bianchi}
\begin{align}
    \th_c\psi_{k+1} - \eth_c'\psi_k &= k\sigma'\psi_{k-1} - (3-k)\kappa\psi_{k+2} , \\
    \th_c'\psi_k - \eth_c\psi_{k+1} &= (3-k)\sigma\psi_{k+2} - k\kappa'\psi_{k-1} , 
\end{align}
\end{subequations}
where $0 \leq k \leq 3$.

\subsection{Conformal field equations}

A large number of physically interesting field equations are conformal. Some important examples include: 
\begin{itemize}
    \item Maxwell's source-free equations $\nabla_{[a}F_{bc]} = 0$, $\nabla^aF_{ab} = 0$, where $F_{ab}$ has conformal weight $0$. 
    \item The conformal scalar wave equation $(\Box + \frac{1}{6}R)\phi$, where $\phi$ has conformal weight $-1$. 
    \item The Penrose wave equation $(\Box + \frac{1}{2}R)W_{abcd} = C_{ab}^{\;\;\;ef}W_{cdef} + 4C^e_{\;\;af[c}W^f_{\;\;d]eb}$, where $W_{abcd}$ has conformal weight $1$, and is given by $W_{abcd} := \Omega^{-1}C_{abcd}$. 
\end{itemize}
These equations can be written in conformal GHP form as
\begin{subequations}
\begin{align}
    &\textrm{Maxwell's source-free equations} \quad \th_c\phi_{k+1} - \eth'_c\phi_k = k\sigma'\phi_{k-1} - (1-k)\kappa\phi_{k+2}, \label{eq:Maxwell} \\
    &\textrm{The conformal scalar wave equation} \quad \Bigl[\th_c\th'_c - \eth_c\eth'_c - \Psi_2 + \sigma\sigma' - \kappa\kappa'\Bigr]\phi = 0 ,\label{eq:scalar_wave} \\
    &\textrm{The Penrose wave equation}
    \quad \Bigl[\th_c\th'_c - \eth_c\eth'_c\Bigr]\psi_k = k\Bigl[\eth_c(\sigma'\psi_{k-1}) - \th_c(\kappa'\psi_{k-1})\Bigr] \label{eq:Penrose} \\
    &\quad + \Bigl[\sigma\eth_c' - \kappa\th_c' + (5-k)(\eth_c'\sigma - \th_c'\kappa) - 6\Psi_1\Bigr]\psi_{k+1} + (3-k)\Bigl[\th_c(\sigma\psi_{k+2}) - \eth_c(\kappa\psi_{k+2})\Bigr] , \nonumber
\end{align}
\end{subequations}
together with the primed versions of Eq.~\eqref{eq:Maxwell} and Eq.~\eqref{eq:Penrose}. In the case $k = 0$, the Penrose wave equation can be written as
\begin{align}
    \label{eq:Teukolsky}
    \Bigl[\th_c\th'_c - \eth_c\eth'_c - 3(\Psi_2 + \sigma\sigma' - \kappa\kappa')\Bigr]\psi_0 &= \Bigl[4\sigma\eth_c' - 4\kappa\th_c' + 5(\eth_c'\sigma - \th_c'\kappa) - 6\Psi_1\Bigr]\psi_1 ,
\end{align}
which becomes the Teukolsky equation in Kerr where $\sigma = \kappa = 0 = \Psi_0 = \Psi_1$. Equations \eqref{eq:Bianchi} and \eqref{eq:Maxwell}, and equations \eqref{eq:scalar_wave} and \eqref{eq:Teukolsky} look suspiciously similar. Indeed, \eqref{eq:Bianchi} and \eqref{eq:Maxwell} are examples of the more general \textit{massless free-field equations}
\begin{subequations}
\label{eq:massless_ff}
\begin{align}
     \th_c\phi_{k+1} - \eth'_c\phi_k &= k\sigma'\phi_{k-1} - (2s - 1 - k)\kappa\phi_{k+2} , \\
     \th_c'\phi_k - \eth_c\phi_{k+1} &= (2s - 1 - k)\sigma\phi_{k+2} - k\kappa'\phi_{k-1} , 
\end{align}
\end{subequations}
where $0 \leq k \leq 2s-1$, and where $\phi_k$ corresponds to the electromagnetic NP scalar if it has weights $\{-2, 2(1-k), 0\}$, it corresponds to the rescaled Weyl scalar $\psi_k$ if it has weights $\{-3, 2(2-k), 0\}$, and in general corresponds to a scalar with weights $\{-s-1, 2(s-k), 0\}$ when $\phi_k$ is an NP scalar of a spin-$s$ field. \\
\\
Any field satisfying the massless free-field equations \eqref{eq:massless_ff} also satisfies the spin-$s$ wave equation, generalizing the conformal scalar wave equation \eqref{eq:scalar_wave} and the Penrose wave equation \eqref{eq:Penrose},
\begin{subequations}
\label{eq:massless_wave}
\begin{align}
    \label{eq:massless_wave_a}
    &\Bigl[\th_c\th'_c - \eth_c\eth'_c\Bigr]\phi_k = k\Bigl[\eth_c(\sigma'\phi_{k-1}) - \th_c(\kappa'\phi_{k-1})\Bigr] + \Bigl[\sigma\eth_c' - \kappa\th_c' + (1 + 2s - k)(\eth_c'\sigma - \th_c'\kappa) \nonumber \\
    &\quad - 2(1 + s)\Psi_1\Bigr]\phi_{k+1} + (2s - 1- k)\Bigl[\th_c(\sigma\phi_{k+2}) - \eth_c(\kappa\phi_{k+2})\Bigr] , \quad 0 \leq k \leq 2s - 1 , \\
    &\Bigl[\th_c\th'_c - \eth_c\eth'_c + 2s(2\sigma\sigma' - 2\kappa\kappa' - 3\Psi_2)\Bigr]\phi_{2s} = \Bigl[\sigma'\eth_c - \kappa'\th_c + (1 + 2s)(\eth_c\sigma' - \th_c\kappa') \nonumber \\
    &\quad - 2(1 + s)\Psi_3\Bigr]\phi_{2s-1} + (2s - 1)\Bigl[\th'_c(\sigma'\phi_{2s-2}) - \eth'_c(\kappa'\phi_{2s-2})\Bigr] .
\end{align}
\end{subequations}

\begin{proof}
The scalar $\phi_k$ has weights $\{w,p,q\} = \{-1-s,2(s-k),0\}$, and notably has weight $q=0$. The wave equations follow from the commutators, which simplify significantly when acting on quantities with weight $q=0$: $[\th_c, \eth_c] = \sigma\eth_c' - \kappa\th_c' - [w-\tfrac{1}{2}p](\eth_c'\sigma - \th_c'\kappa) + 2w\Psi_1$ and $[\th_c, \th'_c] - [\eth_c, \eth'_c] = p(2\sigma\sigma' - 2\kappa\kappa' - 3\Psi_2)$. For $0 \leq k \leq 2s-1$, we compute
\begin{align}
    \Bigl[\th_c\th'_c - \eth_c\eth'_c\Bigr]\phi_k &= \th_c\Bigl(\eth_c\phi_{k+1} + (2s-1-k)\sigma\phi_{k+2} - k\kappa'\phi_{k-1}\Bigr) \nonumber \\
    &\qquad - \eth_c\Bigl(\th_c\phi_{k+1} - k\sigma'\phi_{k-1} + (2s-1-k)\kappa\phi_{k+2}\Bigr) \nonumber \\
    &= [\th_c, \eth_c]\phi_{k+1} + k\Bigl[\eth_c(\sigma'\phi_{k-1}) - \th_c(\kappa'\phi_{k-1})\Bigr] \\
    &\qquad + (2s - 1- k)\Bigl[\th_c(\sigma\phi_{k+2}) - \eth_c(\kappa\phi_{k+2})\Bigr] \nonumber \\
    &= k\Bigl[\eth_c(\sigma'\phi_{k-1}) - \th_c(\kappa'\phi_{k-1})\Bigr] + \Bigl[\sigma\eth_c' - \kappa\th_c' + (1 + 2s - k)(\eth_c'\sigma - \th_c'\kappa) \nonumber \\
    &\qquad - 2(1 + s)\Psi_1\Bigr]\phi_{k+1} + (2s - 1- k)\Bigl[\th_c(\sigma\phi_{k+2}) - \eth_c(\kappa\phi_{k+2})\Bigr] . \nonumber
\end{align}
$\phi_{2s}$ satisfies the primed equation
\begin{multline}
    \Bigl[\th_c'\th_c - \eth_c'\eth_c\Bigr]\phi_{2s} = \Bigl[\sigma'\eth_c - \kappa'\th_c + (1 + 2s)(\eth_c\sigma' - \th_c\kappa') - 2(1 + s)\Psi_3\Bigr]\phi_{2s-1} \\
    + (2s - 1)\Bigl[\th_c'(\sigma'\phi_{2s-2}) - \eth_c'(\kappa'\phi_{2s-2})\Bigr] .
\end{multline}
Commuting the derivatives on the left-hand side using Eq.~\eqref{eq:wave_commutator} yields
\begin{multline}
    \Bigl[\th_c\th'_c - \eth_c\eth'_c + 2s(2\sigma\sigma' - 2\kappa\kappa' - 3\Psi_2)\Bigr]\phi_{2s} = \Bigl[\sigma'\eth_c - \kappa'\th_c + (1 + 2s)(\eth_c\sigma' - \th_c\kappa') - 2(1 + s)\Psi_3\Bigr]\phi_{2s-1} \\
    + (2s - 1)\Bigl[\th'_c(\sigma'\phi_{2s-2}) - \eth'_c(\kappa'\phi_{2s-2})\Bigr] .
\end{multline}
\end{proof}

\subsection{Some applications of the conformal GHP formalism}

Stokes' theorem, which we will need in the next section, is independent of the metric altogether and is therefore also (trivially) conformally invariant. The following proposition expresses Stokes' theorem on a null surface in conformal GHP form: 

\begin{proposition}
    Let $n^a$ be a future directed null congruence generating a null hypersurface $\mathcal{N}$. Let $\Sigma$ be a portion of $\mathcal{N}$ having a future boundary $\mathcal{S}'$ and a past boundary $\mathcal{S}$, and let $m^a$ be a complex null vector field tangent to $\mathcal{S}'$ and $\mathcal{S}$. The fundamental theorem of calculus on $\mathcal{N}$ is given by 
    \begin{align}
        \label{eq:conformal_stokes}
        \int_\Sigma\bigl[\th'_c\mu_0 - \eth_c\mu_1 - (\eth'_c + \bar{\tau} - \tau')\nu_1\bigr]\mathcal{N} &= \oint_{\mathcal{S}'}\mu_0\mathcal{S} - \oint_\mathcal{S}\mu_0\mathcal{S} ,
    \end{align}
    where $\mathcal{N}$ and $\mathcal{S}$ denote the volume form. The weighted scalars have weights $\{w, p, q\}$ given by
    \begin{align}
        \mu_0 &: \{-2, 0, 0\} , & \mu_1 &: \{-2, -2, 0\} , & \nu_1 &: \{-2, 0, -2\} ,
    \end{align}
    but are otherwise arbitrary.    
\end{proposition}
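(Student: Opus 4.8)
The plan is to recognise the left-hand side as $\int_\Sigma d\alpha$ for a single genuine (weight $\{0,0,0\}$) two-form $\alpha$ on $\mathcal{N}$, so that the identity becomes an instance of the ordinary Stokes theorem $\int_\Sigma d\alpha = \oint_{\mathcal{S}'}\alpha - \oint_{\mathcal{S}}\alpha$. Fixing the transverse null covector $\mathbf{l}$ with $l^a n_a = 1$, I take the volume and area forms to be $\mathcal{N} = -i\,\mathbf{l}\wedge\mathbf{m}\wedge\bar{\mathbf{m}}$ and $\mathcal{S} = -i\,\mathbf{m}\wedge\bar{\mathbf{m}}$, so that $\mathcal{N} = \mathbf{l}\wedge\mathcal{S}$. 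Two geometric facts drive everything: since $\mathcal{N}$ is generated by $n^a$ the pullback of $\mathbf{n}$ to $\mathcal{N}$ vanishes (hence $\mathbf{n}$ may be dropped from any form restricted to $\mathcal{N}$, and the generator is geodesic and twist-free, $\kappa' = 0$ and $\rho' = \bar{\rho}'$ on $\mathcal{N}$); and since each cut is tangent to $m^a,\bar{m}^a$ the pullback of $\mathbf{l}$ to $\mathcal{S}$ vanishes.

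I would then organise the work around the decomposition $\alpha = \mu_0\mathcal{S} - i\mu_1\,\mathbf{l}\wedge\mathbf{m} - i\nu_1\,\mathbf{l}\wedge\bar{\mathbf{m}}$. The weights assigned to $\mu_1$ and $\nu_1$ are precisely those that make $\mu_1\,\mathbf{l}\wedge\mathbf{m}$ and $\nu_1\,\mathbf{l}\wedge\bar{\mathbf{m}}$ genuine two-forms (using $\mathbf{l}:\{1,1,1\}$, $\mathbf{m}:\{1,1,-1\}$, $\bar{\mathbf{m}}:\{1,-1,1\}$), just as the weight of $\mu_0$ makes $\mu_0\mathcal{S}$ genuine; this is the structural reason those particular weights appear in the statement. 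Because $\mathbf{l}$ pulls back to zero on the cuts, only the first piece survives on the boundary, $\alpha|_{\mathcal{S}} = \mu_0\mathcal{S}$, so Stokes immediately reproduces the right-hand side $\oint_{\mathcal{S}'}\mu_0\mathcal{S} - \oint_{\mathcal{S}}\mu_0\mathcal{S}$, while the $\mu_1$ and $\nu_1$ pieces contribute nothing to it. Equivalently, $\int_\Sigma\eth_c\mu_1\,\mathcal{N} = 0$ and $\int_\Sigma(\eth'_c + \bar{\tau} - \tau')\nu_1\,\mathcal{N} = 0$ hold on their own, each being the exterior derivative of a two-form vanishing on $\partial\Sigma$; these terms are carried along precisely because they are the ambiguity one wants available when manipulating flux integrands later.

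The core of the proof is then to compute $d\alpha$, restrict to $\mathcal{N}$, and read off the coefficient of $\mathcal{N} = -i\,\mathbf{l}\wedge\mathbf{m}\wedge\bar{\mathbf{m}}$. For each piece I expand $d(f\,\Theta) = df\wedge\Theta + f\,d\Theta$; the $df$ factor supplies the directional derivatives (its $\mathbf{l}$-component giving $\th'$, its $\mathbf{m}$- and $\bar{\mathbf{m}}$-components giving $\eth'$ and $\eth$), while $d\Theta$ supplies the spin-coefficient corrections through the Cartan structure equations for $\mathbf{l}\wedge d\mathbf{l}$ and $\mathbf{m}\wedge d\mathbf{m}$ quoted above, together with their complex conjugates and primed versions. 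The crucial simplification is that every term carrying a factor of $\mathbf{n}$ drops upon restriction to $\mathcal{N}$, which removes the $\kappa$- and most curvature-laden pieces and leaves only the $\mathbf{l}\wedge\mathbf{m}\wedge\bar{\mathbf{m}}$ coefficients. I expect the $\mu_0$ term to produce $\th'\mu_0$ plus the expansion of the generators $-(\rho'+\bar{\rho}')\mu_0$, which equals $-2\rho'\mu_0$ since $\rho'=\bar{\rho}'$ on $\mathcal{N}$, exactly the connection term promoting $\th'$ to $\th'_c$ on a weight $\{-2,0,0\}$ scalar; the $\mu_1$ term to produce $-(\eth - \tau)\mu_1 = -\eth_c\mu_1$; and the $\nu_1$ term to produce $-(\eth' - \bar{\tau})\nu_1 = -(\eth'_c + \bar{\tau} - \tau')\nu_1$, using the weight reductions $\eth_c\mu_1 = \eth\mu_1 - \tau\mu_1$ and $(\eth'_c + \bar{\tau} - \tau')\nu_1 = \eth'\nu_1 - \bar{\tau}\nu_1$ valid at these weights.

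The main obstacle is the bookkeeping in this last step: getting every numerical coefficient and sign right, and in particular understanding the asymmetry between the $\mu_1$ and $\nu_1$ terms—why one closes up into a bare $\eth_c$ while the other carries the extra $\bar{\tau} - \tau'$. This asymmetry must come entirely from the complex-conjugate pair of Cartan equations controlling $d(\mathbf{l}\wedge\mathbf{m})$ versus $d(\mathbf{l}\wedge\bar{\mathbf{m}})$, combined with the different weight terms built into $\eth_c$ and $\eth'_c$. A useful guardrail throughout is that $\alpha$ is a genuine two-form, so $d\alpha$ is genuine and its coefficient is forced to carry weights $\{-3,-1,-1\}$; since $\th'_c$, $\eth_c$, $\eth'_c$ are the weight $-1$ covariant operators, the result can only be assembled from these acting on $\mu_0,\mu_1,\nu_1$ (plus possible spin-coefficient terms of the correct weight), which both guides the computation and certifies that no spurious curvature terms can survive. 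Finally I would fix the relative orientation of $\mathcal{N}$, $\mathcal{S}$ and $\mathcal{S}'$ so that $\partial\Sigma = \mathcal{S}' - \mathcal{S}$ and the signs on the right-hand side come out as stated.
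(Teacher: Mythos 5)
Your proposal is correct and arrives at the same identity, but by a more self-contained route than the paper. The paper's proof is essentially two lines: it quotes the GHP form of Stokes' theorem on a null hypersurface from Penrose \& Rindler (\S4.14), namely $\int_\Sigma[(\th'-2\rho')\mu_0 - (\eth-\tau)\mu_1 - (\eth'-\bar{\tau})\nu_1]\mathcal{N} = \oint_{\mathcal{S}'}\mu_0\mathcal{S} - \oint_{\mathcal{S}}\mu_0\mathcal{S}$ with $\mu_0 = -2i\beta_{ab}m^a\bar{m}^b$, $\mu_1 = 2i\beta_{ab}\bar{m}^an^b$, $\nu_1 = -2i\beta_{ab}m^an^b$, and then simply observes that at the stated weights $\th'_c\mu_0 = (\th'-2\rho')\mu_0$, $\eth_c\mu_1 = (\eth-\tau)\mu_1$ and $(\eth'_c+\bar{\tau}-\tau')\nu_1 = (\eth'-\bar{\tau})\nu_1$. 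You instead re-derive the cited GHP formula from scratch by writing the two-form as $\mu_0\mathcal{S} - i\mu_1\,\mathbf{l}\wedge\mathbf{m} - i\nu_1\,\mathbf{l}\wedge\bar{\mathbf{m}}$, computing $d$ via the Cartan structure equations, and discarding everything proportional to the pullback of $\mathbf{n}$. Your predicted coefficients --- $(\th'-2\rho')\mu_0$ using $\rho'=\bar{\rho}'$ on $\mathcal{N}$, then $-(\eth-\tau)\mu_1$ and $-(\eth'-\bar{\tau})\nu_1$ --- agree exactly with the formula the paper cites, and your weight-reduction identities coincide with the paper's final conversion step, so the Cartan-equation bookkeeping you flag as the remaining obstacle is guaranteed to close. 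What your version buys is a structural explanation of why those particular weights appear (they are forced by requiring $\mu_0\mathcal{S}$, $\mu_1\,\mathbf{l}\wedge\mathbf{m}$ and $\nu_1\,\mathbf{l}\wedge\bar{\mathbf{m}}$ to be genuine unweighted forms) and of why the $\mu_1$ and $\nu_1$ terms integrate to zero on their own; what the paper's version buys is brevity, at the cost of outsourcing the differential geometry to a citation.
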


\begin{proof}
    The fundamental theorem of calculus $\int_\Sigma d\boldsymbol{\beta} = \oint_{\mathcal{S}'} \boldsymbol{\beta} - \oint_\mathcal{S} \boldsymbol{\beta}$ on a null surface is given in GHP from as (see \S4.14 of \cite{Penrose1985})
    \begin{align}
        \int_\Sigma\bigl[(\th' - 2\rho')\mu_0 - (\eth - \tau)\mu_1 - (\eth' - \bar{\tau})\nu_1\bigr]\mathcal{N} &= \oint_{\mathcal{S}'}\mu_0\mathcal{S} - \oint_\mathcal{S}\mu_0\mathcal{S} ,
    \end{align}
    where $\mu_0 = -2i\beta_{ab}m^a\bar{m}^b$, $\mu_1 = 2i\beta_{ab}\bar{m}^an^b$ and $\nu_1 = -2i\beta_{ab}m^an^b$. If $\beta_{ab}$ has conformal weight $0$, then $\mu_0$, $\mu_1$ and $\nu_1$ all have conformal weight $-2$. The result \eqref{eq:conformal_stokes} follows by re-writing the GHP operators in terms of the conformal GHP operators \eqref{eq:conformal_GHP}. 
\end{proof}

\noindent We obtain a conservation law when we can find appropriately weighted scalars $\mu_0$, $\mu_1$, and $\nu_1$, satisfying
\begin{align}
    \label{eq:conservation_law}
    \th'_c\mu_0 - \eth_c\mu_1 - (\eth'_c + \bar{\tau} - \tau')\nu_1 &= 0
\end{align}
\begin{example}[Conservation of charge]
    In the special case that $\mu_0 = \phi_1$, $\mu_1 = \phi_2$ are electromagnetic scalars we obtain conservation of charge. The component of Maxwell's equations \eqref{eq:Maxwell}$'$ with $k=1$ is given by $\th_c'\phi_1 - \eth_c\phi_2 = -\kappa'\phi_0$. If $n^a$ generates a null surface $\mathcal{N}$ then $\kappa' = 0$ and hence
    \begin{align}
        \oint_{\mathcal{S}'}\phi_1\mathcal{S} - \oint_\mathcal{S}\phi_1\mathcal{S} &= \int_\Sigma(\th_c'\phi_1 - \eth_c\phi_2)\mathcal{N} = 0 .
    \end{align}
    The scalar $\phi_1$ is given by $\phi_1 = -\tfrac{1}{2}(F_{ab} + i(\star F)_{ab})m^a\bar{m}^b$ so that $\oint\phi_1\mathcal{S} = \tfrac{1}{4}\oint\star\mathbf{F} - \tfrac{i}{4}\oint\mathbf{F}$ is indeed simply the electric charge ($-i\times$ the magnetic charge). 
\end{example}

\begin{lemma}[Integration by parts]
    \label{lemma:integration_by_parts}
    Let $\mathcal{S}$ be a closed surface. Then for any weighted functions $f$ and $g$ such that $fg$ has weights $\{-1,-1,1\}$, 
    \begin{align}
        \oint f\eth_cg\mathcal{S} = -\oint g\bigl[\eth_c - (\tau - \bar{\tau}')\bigr]f\mathcal{S} .
    \end{align}
    If $fg$ has weights $\{-1,1,-1\}$, 
    \begin{align}
        \oint f\eth_c'g\mathcal{S} = -\oint g\bigl[\eth_c' - (\bar{\tau} - \tau')\bigr]f\mathcal{S} .
    \end{align}
\end{lemma}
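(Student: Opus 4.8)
The plan is to convert the statement into a single ``total eth-divergence vanishes'' identity on the closed surface, and then prove that identity with Stokes' theorem applied to a genuinely tetrad-invariant one-form. First I would note that $\eth_c$ obeys the Leibniz rule $\eth_c(fg) = f\,\eth_c g + g\,\eth_c f$. This is immediate from the definitions \eqref{eq:conformal_GHP}: the correction terms $[w-\tfrac12(p-q)]\tau + q(\tau-\bar\tau')$ are linear in the weights $\{w,p,q\}$, and weights add under multiplication, so the corrections distribute across the product. Integrating over $\mathcal{S}$ and rearranging, the claimed formula is equivalent to the single statement
\[
  \oint_\mathcal{S}\bigl[\eth_c - (\tau - \bar\tau')\bigr]h\,\mathcal{S} = 0, \qquad h := fg : \{-1,-1,1\}.
\]

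The key observation is that the weight $\{-1,-1,1\}$ is exactly the weight for which $\boldsymbol\omega := h\,\mathbf m$ is tetrad-invariant: under $\hat m_a = \Omega\lambda\bar\lambda^{-1}m_a$ and $\hat h = \Omega^{-1}\lambda^{-1}\bar\lambda\,h$ the scale factors cancel, so $\boldsymbol\omega$ is an honest (weight-zero) one-form on $\mathcal{S}$. I would therefore apply Stokes' theorem to the closed surface, $\oint_\mathcal{S}d\boldsymbol\omega = 0$, and extract the content by pulling $d\boldsymbol\omega$ back to $\mathcal{S}$. Evaluating on the dyad, $d\boldsymbol\omega(m,\bar m) = -\,m^a\partial_a h - h\,m_a[m,\bar m]^a$. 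Because $m_a l^a = m_a n^a = 0$, the components of the Lie bracket along $l^a$ and $n^a$ are annihilated by the contraction with $m_a$; equivalently, the $\sigma$, $\bar\sigma'$ and $\tau-\bar\tau'$ spin coefficients appearing in the Cartan equation for $\mathbf m\wedge d\mathbf m$ quoted in the excerpt measure precisely the $\mathbf n$- and $\mathbf l$-components of $d\mathbf m$, which drop out on pullback to the spacelike $\mathcal{S}$. What survives is the $\mathbf m\wedge\bar{\mathbf m}$ component, encoding the intrinsic spin connection of $\mathcal{S}$, and this supplies exactly the weight-dependent spin-coefficient corrections promoting the bare directional derivative $m^a\partial_a$ to the GHP operator $\eth$; hence $\oint_\mathcal{S}\eth h\,\mathcal{S} = 0$.

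It remains to identify this with the target. Evaluating the definition of $\eth_c$ at $\{w,p,q\}=\{-1,-1,1\}$ gives $[w-\tfrac12(p-q)] = 0$ and $q = 1$, so $[\eth_c - (\tau-\bar\tau')]h = \eth h$ exactly; the conformal phrasing of the lemma is thus the classical GHP identity dressed in conformal notation, and the reduction of the first paragraph completes the proof. The primed statement follows by the GHP priming operation $l\leftrightarrow n$, $m\leftrightarrow\bar m$, or equivalently by repeating the argument verbatim with the conjugate invariant one-form $h'\,\bar{\mathbf m}$ for $h'$ of weight $\{-1,1,-1\}$, using the conjugate Cartan equation.

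I expect the main obstacle to be the pullback computation of $d\boldsymbol\omega$: one must treat the exterior derivative carefully, since it is the form $\boldsymbol\omega = h\,\mathbf m$ and not the weighted scalar $h$ that is differentiable, and then verify that the spin coefficients produced by the bracket $[m,\bar m]$ combine with the weight terms to reconstruct exactly $\eth$, with no residual $\rho-\bar\rho$, $\sigma$, or $\bar\sigma'$ contributions left multiplying the area form. Once the invariant one-form has been correctly identified, this is delicate but routine bookkeeping.
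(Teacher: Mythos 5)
Your proposal is correct and follows essentially the same route as the paper: apply the Leibniz rule for $\eth_c$, observe that at weights $\{-1,-1,1\}$ the conformal corrections reduce $\eth_c - (\tau-\bar{\tau}')$ acting on $fg$ to the bare GHP $\eth$, and invoke $\oint_\mathcal{S}\eth\alpha\,\mathcal{S}=0$ for type-$\{-1,1\}$ scalars on a closed surface. The only difference is that the paper cites that last identity as standard (Penrose--Rindler \S4.14) whereas you prove it via Stokes' theorem applied to the invariant one-form $h\,\mathbf{m}$; your worry about residual spin coefficients is unfounded, since $m_a[m,\bar{m}]^a = \beta+\bar{\beta}'$ exactly and $m_a\delta' m^a = 0$, so the bracket term reproduces precisely the connection terms in $\eth$ acting on a $\{-1,1\}$ quantity.
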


\begin{proof}
    First, compute $\eth_c(fg)$ as
    \begin{align}
        f\eth_cg + g\eth_cf &= \eth_c(fg) = (\eth + \tau - \bar{\tau}')(fg) .
    \end{align}
    Then, using the fact that $\oint\eth\alpha\mathcal{S} = 0$ on a closed surface for any GHP-weighted scalar with weights $\{p,q\} = \{-1,1\}$, both sides can be integrated to obtain
    \begin{align}
        \oint(f\eth_cg + g\eth_cf)\mathcal{S} &= \oint(\tau - \bar{\tau}')fg\mathcal{S} .
    \end{align}
    The integration by parts formula for $\eth_c'$ is obtained in a similar manner, \textit{mutatis mutandis}. 
\end{proof}

The conformal GHP formalism can not only be used to compactly state conformal results, but also efficiently prove them. Before moving on to its application to NP conservation laws in the next section, let me end this section with a wonderfully compact and straightforward proof of the famous Goldberg-Sachs theorem \cite{Goldberg1962} using the conformal GHP formalism. 

\begin{theoremgb}[Goldberg \& Sachs (1962) \cite{Goldberg1962}]
    A nowhere flat vacuum space-time has a multiple gravitational principal null vector $l^a$ if and only if $l^a$ is a shear-free ray. In GHP form this is stated as $\{\Psi_0 = \Psi_1 = 0 \quad \textrm{and} \quad \Psi_p \neq 0\} \Longleftrightarrow \{\kappa = \sigma = 0\}$, for some $2 \leq p \leq 4$.
\end{theoremgb}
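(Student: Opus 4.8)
The plan is to prove both implications directly from the conformal vacuum Bianchi identities \eqref{eq:Bianchi}, the field equation \eqref{eq:psi0}, and the commutator \eqref{eq:commutator}; it is precisely the conformal form of these identities that makes the argument short. Throughout I write $\psi_k = \Omega^{-1}\Psi_k$, which carries weights $\{-3, 2(2-k), 0\}$, so in particular $q=0$. For such quantities the commutator \eqref{eq:commutator} reduces to $[\th_c,\eth_c] = \sigma\eth_c' - \kappa\th_c' - [w-\tfrac{1}{2}p](\eth_c'\sigma - \th_c'\kappa) + 2w\Psi_1$, and once $\kappa=\sigma=0$ is imposed the spin-coefficient-derivative terms $\eth_c'\sigma$ and $\th_c'\kappa$ vanish, leaving the pure multiplication operator $2w\Psi_1$. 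This collapse is the engine of the forward direction.

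First I would dispatch $\{\kappa=\sigma=0\}\Rightarrow\{\Psi_0=\Psi_1=0\}$. Setting $\kappa=\sigma=0$ in \eqref{eq:psi0} gives $\Psi_0=0$ immediately, hence $\psi_0=0$. Substituting $\kappa=\sigma=0$ and $\psi_0=0$ into the two $k=0$ identities of \eqref{eq:Bianchi} yields $\th_c\psi_1 = 0$ and $\eth_c\psi_1 = 0$, so $[\th_c,\eth_c]\psi_1 = 0$ trivially. On the other hand the reduced commutator gives $[\th_c,\eth_c]\psi_1 = 2w\Psi_1\psi_1 = -6\Omega\psi_1^2$ since $w=-3$ and $\Psi_1 = \Omega\psi_1$. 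Comparing the two evaluations forces $\psi_1^2=0$, hence $\Psi_1=0$. This is the slick half.

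For the converse $\{\Psi_0=\Psi_1=0,\ \Psi_p\neq0\}\Rightarrow\{\kappa=\sigma=0\}$ I would again begin with the $k=0$ identities of \eqref{eq:Bianchi}, now with $\psi_0=\psi_1=0$, which reduce to the algebraic relations $\kappa\psi_2=0$ and $\sigma\psi_2=0$. On the open region where $\psi_2\neq0$ these already give $\kappa=\sigma=0$, covering Petrov types II and D. To reach types III and N I would descend through the hierarchy: on the interior of the zero set of $\psi_2$ the derivatives $\th_c\psi_2$, $\eth_c\psi_2$, $\th_c'\psi_2$, $\eth_c'\psi_2$ all vanish, so the $k=1$ and $k=2$ identities of \eqref{eq:Bianchi} collapse to $\kappa\psi_3=\sigma\psi_3=0$ and then $\kappa\psi_4=\sigma\psi_4=0$. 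Since the space-time is nowhere flat, at each point at least one of $\psi_2,\psi_3,\psi_4$ is non-zero, so the three layers of algebraic constraints combine to give $\kappa=\sigma=0$.

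The main obstacle is purely regularity-theoretic and lives entirely in this converse direction: passing from the pointwise constraints $\kappa\psi_p=\sigma\psi_p=0$ to the vanishing of $\kappa$ and $\sigma$ requires the vanishing loci of the individual $\psi_p$ to be open, which is exactly what legitimises the steps $\th_c\psi_2=\eth_c\psi_2=0$ on the zero set of $\psi_2$. I would handle this in the standard fashion, establishing the result first on the open dense set where the Petrov type is locally constant and then extending to its closure by continuity of the spin coefficients. The algebraic skeleton, by contrast, is supplied almost for free by the compact conformal Bianchi identities \eqref{eq:Bianchi}.
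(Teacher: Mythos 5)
Your argument is correct and essentially coincides with the paper's proof: both directions rest on Eq.~\eqref{eq:psi0}, the conformal vacuum Bianchi identities \eqref{eq:Bianchi}, and the collapse of the commutator \eqref{eq:commutator} to the multiplication operator $-6\Psi_1$ once $\kappa=\sigma=0$. The only cosmetic difference is that in the direction $\{\Psi_0=\Psi_1=0,\ \Psi_p\neq0\}\Rightarrow\{\kappa=\sigma=0\}$ the paper invokes a single Bianchi identity indexed by the minimal $p$ with $\Psi_p\neq0$, whereas you descend through $\psi_2,\psi_3,\psi_4$ explicitly and flag the open-set/regularity point that the paper leaves implicit.
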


\noindent Note that the statement of this theorem is not conformal, since the vacuum condition is not conformal. The proof here, however, is conformal and uses the conformal vacuum Bianchi identities \eqref{eq:Bianchi}. We may therefore weaken the vacuum requirement to merely requiring \textit{conformal vacuum}.\footnote{A space-time with a metric $g_{ab}$ is a conformal vacuum space-time if there exists a conformal factor $\Omega$ such that $\hat{g}_{ab} = \Omega^2g_{ab}$ satisfies the vacuum Einstein equations $\hat{R}_{ab}(\hat{g}) = 0$.} This proof is similar in spirit to Penrose \& Rindler's proof which uses the (non-conformal) GHP formalism---see proposition (7.3.35) of \cite{Penrose1986}. 

\begin{proof}
    Recall that $\psi_k = \Omega^{-1}\Psi_k$. We may set $\Omega = 1$, but beware that $\psi_k$ and $\Psi_k$ have different conformal weights so that making this choice too early would prohibit us from using the conformal GHP formalism. \\
    $\Longrightarrow$: The conformal vacuum Bianchi identities \eqref{eq:Bianchi} imply
    \begin{subequations}
    \begin{align}
        (5-p)\kappa\psi_p &= -\th_c\psi_{p-1} + \eth_c'\psi_{p-2} + (p-2)\sigma'\psi_{p-3} = 0 & \implies \kappa &= 0 , \\
        (5 - p)\sigma\psi_p &= \th_c'\psi_{p-2} - \eth_c\psi_{p-1} + (p - 2)\kappa'\psi_{p-3} = 0 & \implies \sigma &= 0 .
    \end{align}
    \end{subequations}
    $\Longleftarrow$: First, notice that equation \eqref{eq:psi0} immediately implies $\Psi_0 = \th_c\sigma - \eth_c\kappa = 0$. The vacuum Bianchi identities \eqref{eq:Bianchi} imply
    \begin{subequations}
    \begin{align}
        \th_c\psi_1 &= \eth_c'\psi_0 - 3\kappa\psi_2 = 0 , \\
        \eth_c\psi_1 &= \th_c'\psi_0 - 3\sigma\psi_2 = 0 .
    \end{align}
    \end{subequations}
    The commutator \eqref{eq:commutator} then yields
    \begin{align}
        0 = [\th_c, \eth_c]\psi_1 &= \Big[\sigma\eth_c' - \kappa\th_c' + (5-k)(\eth_c'\sigma - \th_c'\kappa) - 6\Psi_1\Bigr]\psi_1 = -6\Psi_1\psi_1 .
    \end{align}
    Hence $\Psi_1 = 0$. 
\end{proof}

\section{Newman-Penrose constants in conformally flat space-time: a covariant and conformal formulation}
\label{sec:NP_constants_in_conformally_flat_spacetime}

The NP conservation laws given by Eq.~\eqref{eq:flat_NP_laws} are strongly coordinate dependent. In this section, in order to later extend the definition to more general wave equations, I will derive a \textit{covariant} formulation of these NP conservation laws for the spin-$s$ wave equation \eqref{eq:massless_wave} in Minkowski space. To achieve this it will be advantageous to use the GHP formalism. This is because the NP constants are defined on the intersection of two light cones. These light cones single out two unique null directions, which define a complex null tetrad up to rotations $m^a \mapsto e^{i\phi}m^a$ and boosts $l^a,n^a \mapsto Al^a,A^{-1}n^a$. Additionally, using the conformal GHP formalism defined in the previous section will further simplify all expressions. \\
\\
I will use Stokes' theorem \eqref{eq:conformal_stokes} as a starting point. To obtain the NP conservation laws we need to specialize to the case when $\mathcal{N}$ is a light cone, $\mathcal{S}$ and $\mathcal{S}'$ are the intersections of light cones, and $\mathcal{N}$, $\mathcal{S}$ and $\mathcal{S}'$ are all spherically symmetric with respect to the same origin. These specializations imply that $\tau - \bar{\tau}' = 0$ and that $l^a$ and $n^a$ are shear-free rays, and hence that the scalars $\phi_k$ satisfy $\th_c\phi_{k+1} = \eth_c'\phi_k$, $\th'_c\phi_k = \eth_c\phi_{k+1}$, and the wave equation (cf. Eq.~\eqref{eq:massless_wave})
\begin{align}
    \label{eq:spin_s_wave}
    \Ro\phi_k := (\th_c\th_c' - \eth_c\eth_c')\phi_k = 0 .
\end{align}
The wave operator $\Ro$ satisfies $\Ro = \Ro'$ when acting on functions with weight $q = 0$ (cf. Eq.~\eqref{eq:wave_commutator}). Consider $\mu_0 = Y\th_c\phi_k$, where the weighted scalar $Y$ is defined as satisfying $\th'_cY = 0 = \eth'_cY$, and having weights $\{s, 2(k-s)-1, -1\}$. We then compute, using the wave equation \eqref{eq:spin_s_wave}$^\prime$,
\begin{align}
    \th'_c\mu_0 &= Y\th'_c\th_c\phi_k = Y\eth'_c\eth_c\phi_k = \eth'_c(Y\eth_c\phi_k) .
\end{align}
Setting $\nu_1 = Y\eth_c\phi_k$ we recognize a conservation law. The quantities $\oint_\mathcal{S}Y\th_c\phi_k\mathcal{S}$ are conserved! Note that $Y$ is a spherical harmonic with $\ell = s - k$. In particular, when $k > s$, $Y = 0$. Indeed, all of this is a more abstract way of observing that the lowest $\ell$-mode ${}_sY_{\ell=s}$ of any harmonic with non-negative spin weight $s$ satisfies $\eth'\eth\;{}_sY_{\ell=s} = 0$, so that $\th'_c\th_c\phi_{k,\ell=s-k} = 0$, where $\phi_{k,\ell}$ denotes the $\ell$-mode of $\phi_k$. \\
\\
The key to finding the NP constants for higher $\ell$-modes is to notice that, if $s \geq 1$, the conserved quantity $\oint_\mathcal{S}Y\th_c\phi_0\mathcal{S}$ can be written in terms of $\phi_1$ as follows: define $Z$ as satisfying $\eth'_cZ = Y$. Then
\begin{align}
\begin{split}
    \label{eq:NP_identity}
    \th_c'(Y\th_c\phi_0) &= \th_c'(\eth_c'(Z\th_c\phi_0) - Z\eth_c'\th_c\phi_0) = \eth_c'\th_c'(Z\th_c\phi_0) - \th_c'(Z\th_c\eth_c'\phi_0) \\
    &= \eth_c'\th_c'(Z\th_c\phi_0) - \th_c'(Z\th_c^2\phi_1) ,
\end{split}
\end{align}
where I used the commutators $[\th_c, \eth_c'] = 0 = [\th_c', \eth_c']$ (cf. Eqs. $\overline{\eqref{eq:commutator}}$ and \eqref{eq:commutator}$'$). It follows that $\oint Z\th_c^2\phi_1\mathcal{S}$ is conserved. If it were possible to define ``$\phi_{-1}$" satisfying ``$\th_c\phi_0 = \eth'_c\phi_{-1}$" and ``$\th'_c\phi_{-1} = \eth_c\phi_0$", and more generally $\phi_k$ satisfying the massless free-field equations \eqref{eq:massless_ff} for \textit{any} integer $k$, then for each $\ell$ and each $k$, there exist conserved quantities of the form $\oint Y\th_c^{\ell-s+k+1}\phi_k\mathcal{S}$ for an appropriately chosen $Y$. However, in general, there cannot be such a $\phi_{-1}$. This can be seen by decomposing $\phi_0$ and $\phi_{-1}$ into spherical harmonics. The lowest $\ell$-mode of $\phi_0$ is $\ell=s$ since $\phi_0$ has spin weight $s$, but the lowest $\ell$-mode of $\phi_{-1}$ is $s+1$. Hence, ``$\th_c\phi_0 = \eth'_c\phi_{-1}$" has solutions $\phi_{-1}$ if and only if $\th_c\phi_0$ has a vanishing $\ell=s$ mode. Instead, let us define $\phi_{-1} := \eth_c\th_c\phi_0$. 
\begin{remark}
    The definition $\phi_{-1} := \eth_c\th_c\phi_0$ is similar in spirit to ``$\th_c\phi_0 = \eth_c'\phi_{-1}$", which implies ``$\eth_c\th_c\phi_0 = \eth_c\eth_c'\phi_{-1}$" so that in terms of spherical harmonics, ``$(\eth_c\th_c\phi_0)_\ell \propto \phi_{-1,\ell}$" for all $\ell$. 
\end{remark}
\noindent $\phi_{-1}$ satisfies the wave equation: 
\begin{align}
\begin{split}
    \Ro\phi_{-1} &= \Ro\eth_c\th_c\phi_0 = (\th_c\th_c' - \eth_c\eth_c')\eth_c\th_c\phi_0 \\
    &= \eth_c\th_c(\th_c'\th_c - \eth_c'\eth_c)\phi_0 = \eth_c\th_c\Ro'\phi_0 = 0 . \label{eq:raising_operator_commutes}
\end{split}
\end{align}
It follows from Eq.~\eqref{eq:raising_operator_commutes} that $\oint_\mathcal{S}Y\th_c\phi_{-1}\mathcal{S}$ is conserved, where $Y$ satisfies $\th'_cY = 0 = \eth'_cY$ and has weights $\{s+1, 2(k-s)-3, 1\}$. This can be written in terms of $\phi_0$ as $-\oint_\mathcal{S}\eth_cY\th_c^2\phi_0$, where in terms of spherical harmonics, $\eth_cY$ consists of ${}_{k-s}Y_{\ell=s+1}$. \\
\\
To generalize this to any $\ell$-mode we use the following important result which follows from Eq.~\eqref{eq:raising_operator_commutes}:

\begin{lemma}
When acting on quantities with weights $\{w, p, 0\}$, 
\begin{align}
    [\Ro, \eth_c\th_c] &= 0 .
\end{align}
\end{lemma}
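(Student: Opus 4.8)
The plan is to reduce the commutator $[\Ro,\eth_c\th_c]$ to the difference of the two ``wave commutators'' $[\th_c,\th_c']-[\eth_c,\eth_c']$ recorded in Eq.~\eqref{eq:wave_commutator}, which is precisely $\Ro-\Ro'$ and which vanishes on weight-$\{w,p,0\}$ quantities. Throughout I work in the setting already fixed in this section: a spherically symmetric light cone in a conformally flat space-time, so that $\kappa=\sigma=\kappa'=\sigma'=0$, $\Psi_k=0$, $\tau-\bar{\tau}'=0$, and $\rho-\bar{\rho}=\rho'-\bar{\rho}'=0$.

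First I would record that in this setting all four mixed commutators vanish identically, $[\th_c,\eth_c]=[\th_c,\eth_c']=[\th_c',\eth_c]=[\th_c',\eth_c']=0$, and --- crucially --- that they vanish on functions of \emph{every} weight, not only $q=0$. Inspecting Eq.~\eqref{eq:commutator} together with its primed and barred variants, the non-$q$ part is proportional to $\sigma$, $\kappa$, $\eth_c'\sigma$, $\th_c'\kappa$ and $\Psi_1$, all of which vanish here, while the whole $q[\cdots]$ bracket is assembled from $\kappa$, $\sigma$, $\tau-\bar{\tau}'$ and $\rho-\bar{\rho}$ (and their primed and barred partners), which also all vanish. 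Hence $\th_c$ and $\th_c'$ commute freely past $\eth_c$ and $\eth_c'$.

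Next I would slide $\eth_c\th_c$ leftward through $\Ro=\th_c\th_c'-\eth_c\eth_c'$ using these mixed commutators. Since $\eth_c$ commutes with both $\th_c$ and $\th_c'$, and $\th_c$ commutes with both $\eth_c$ and $\eth_c'$, a short rearrangement gives
\begin{align}
  [\th_c\th_c',\eth_c\th_c] &= -\eth_c\th_c[\th_c,\th_c'], & [\eth_c\eth_c',\eth_c\th_c] &= -\th_c\eth_c[\eth_c,\eth_c'],
\end{align}
so that, using $\eth_c\th_c=\th_c\eth_c$,
\begin{align}
  [\Ro,\eth_c\th_c] &= \eth_c\th_c\bigl([\eth_c,\eth_c']-[\th_c,\th_c']\bigr).
\end{align}
Finally I would recognise $[\eth_c,\eth_c']-[\th_c,\th_c']=\Ro'-\Ro$ and invoke Eq.~\eqref{eq:wave_commutator}, which shows $\bigl([\th_c,\th_c']-[\eth_c,\eth_c']\bigr)\eta=p(2\sigma\sigma'-2\kappa\kappa'-3\Psi_2)\eta+q[\cdots]\eta$ vanishes for any $\eta$ of weight $\{w,p,0\}$ in this setting (equivalently, $\Ro=\Ro'$ on $q=0$ quantities). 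Applying this to $\eta=\phi$ yields $[\Ro,\eth_c\th_c]\phi=\eth_c\th_c\cdot 0=0$.

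The main obstacle, and the only genuinely delicate point, is that sliding $\eth_c\th_c$ through $\Ro$ forces one to commute operators while passing through intermediate functions of weight $q=\pm1$ (for instance $\th_c\phi$ has $q=1$). It is therefore not sufficient to know that the mixed commutators vanish on $q=0$; one must check that they vanish at $q=\pm1$ as well. This is exactly where the spherical-symmetry conditions $\tau-\bar{\tau}'=0$ and $\rho-\bar{\rho}=0$ do the essential work, since it is these (over and above mere shear-freeness and conformal flatness) that annihilate the $q$-dependent bracket of Eq.~\eqref{eq:commutator}. Without them the slide would generate curvature- and spin-coefficient terms and the lemma would fail.
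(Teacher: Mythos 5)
Your proposal is correct and follows essentially the same route as the paper: both slide $\eth_c\th_c$ through $\Ro$ using the vanishing of all four mixed commutators $[\th_c,\eth_c]=[\th_c,\eth_c']=[\th_c',\eth_c]=[\th_c',\eth_c']=0$ in the shear-free, conformally flat, spherically symmetric setting, reducing the statement to $\Ro\,\eth_c\th_c=\eth_c\th_c\,\Ro'$ together with $\Ro=\Ro'$ on $q=0$ quantities (Eq.~\eqref{eq:raising_operator_commutes}). Your additional check that the mixed commutators vanish on the intermediate $q=\pm1$ weights is a point the paper leaves implicit, but it does not change the argument.
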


\noindent Hence, when applying $\eth_c\th_c$ to a function with equal spin- and boost weight satisfying the spin-$s$ wave equation, we can generate new solutions with higher spin. The following conservation laws are a consequence: 

\begin{theorem}
    \label{thm:covariant_NP_conservation_law}
    Let $\phi_k$ be a scalar with weights $\{-s-1, 2(s-k), 0\}$ satisfying $\Ro\phi_k = 0$. Let $\mathcal{N}$ be a light cone in Minkowski space generated by $n^a$. Choose a foliation of $\mathcal{N}$ by spherical cuts $\mathcal{S}$ which are spherically symmetric with respect to a common centre. If $\mathcal{S}$ and $\mathcal{S}'$ are any two cuts belonging to this foliation, then
    \begin{align}
        \oint_\mathcal{S}Y\th_c^{\ell-s+k+1}\phi_k\mathcal{S} &= \oint_{\mathcal{S}'}Y\th_c^{\ell-s+k+1}\phi_k\mathcal{S} , 
    \end{align}
    where $Y = \eth_c^{\ell-s+k}Z$ is a weighted function with weights
    \begin{align}
        Y &: \{\ell+k,-\ell-s+k-1,-\ell+s-k-1\} . \label{eq:Y_weights} 
    \end{align}
    $Z$ satisfies $\th_c'Z = 0 = \eth_c'Z$. In terms of spin-weighted spherical harmonics, $Y$ consists of ${}_{k-s}Y_\ell$. 
\end{theorem}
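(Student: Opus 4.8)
The plan is to reduce the general-$\ell$ statement to the base case $\ell = s-k$ already worked out in the discussion preceding the theorem, where $\oint_\mathcal{S} Y\th_c\phi_k\mathcal{S}$ was shown to be conserved for any $Y$ with $\th_c'Y = 0 = \eth_c'Y$ of the appropriate weight. Throughout I would write $j := \ell - s + k \geq 0$ (harmonics with $\ell < s-k$ do not exist, consistently with $Y=0$ there), so that the target quantity is $\oint_\mathcal{S}(\eth_c^j Z)\th_c^{j+1}\phi_k\mathcal{S}$. The two extra ingredients beyond the base case are the raising operator $\eth_c\th_c$ and integration by parts.

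First I would build a tower of wave solutions. Set $\Phi := (\eth_c\th_c)^j\phi_k$. By the preceding lemma, $[\Ro,\eth_c\th_c]=0$ on weight-$\{w,p,0\}$ functions, and since $\eth_c\th_c$ preserves the vanishing of the third weight, $\Phi$ again solves $\Ro\Phi=0$ with $q=0$. Tracking weights, $\Phi$ carries $\{-s-1-2j,\,2(s-k)+2j,\,0\}$, which is exactly the weight of the $(k+j)$-component of a spin-$(s+2j)$ field. Applying the base case to $\Phi$ then shows $\oint_\mathcal{S} Y_j\,\th_c\Phi\,\mathcal{S}$ is conserved for any $Y_j$ satisfying $\th_c'Y_j = 0 = \eth_c'Y_j$ with weights $\{s+2j,\,2(k-s-j)-1,\,-1\}$, and one checks these are precisely the weights of $Z$.

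Next I would convert this into the stated form. Because $\kappa=\sigma=0$, $\rho-\bar\rho=0$, $\tau-\bar\tau'=0$ and the curvature vanishes on a spherically symmetric light cone in Minkowski space, the commutator \eqref{eq:commutator} reduces to $[\th_c,\eth_c]=0$ at every weight; hence $(\eth_c\th_c)^j = \eth_c^j\th_c^j$ and $\th_c\Phi = \eth_c^j\th_c^{j+1}\phi_k$. I would then integrate by parts $j$ times with Lemma \ref{lemma:integration_by_parts}, which on the light cone (where $\tau-\bar\tau'=0$) collapses to $\oint f\eth_c g\,\mathcal{S} = -\oint g\,\eth_c f\,\mathcal{S}$, moving the $j$ factors of $\eth_c$ off $\th_c^{j+1}\phi_k$ and onto $Y_j$. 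This produces $\oint_\mathcal{S} Y_j\eth_c^j\th_c^{j+1}\phi_k\,\mathcal{S} = (-1)^j\oint_\mathcal{S}(\eth_c^j Y_j)\th_c^{j+1}\phi_k\,\mathcal{S}$, where the integrand sits at weight $\{-1,-1,1\}$ at every stage, so the lemma is applicable throughout. Setting $Z := (-1)^j Y_j$ and $Y := \eth_c^j Z$ gives the conserved quantity, and a final weight count confirms \eqref{eq:Y_weights}. The harmonic content follows since $Z$ has spin weight $-\ell$ and $\eth_c'Z=0$ forces $Z\propto {}_{-\ell}Y_\ell$, so $Y=\eth_c^j Z$ stays in the $\ell$-mode while its spin weight is raised to $k-s$, i.e. $Y$ consists of ${}_{k-s}Y_\ell$.

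The hard part will not be any single computation but the bookkeeping that ties the three pieces together: verifying that the weights produced by the base case on $\Phi$ coincide with those of $Z$, and that the integrand retains the exact weight $\{-1,-1,1\}$ demanded by Lemma \ref{lemma:integration_by_parts} through all $j$ integrations by parts. Everything else rests on the vanishing of the relevant commutators and of $\tau-\bar\tau'$, which are guaranteed by the hypothesis that $\mathcal{N}$ and its cuts are round about a common centre in flat space. I would also isolate the existence of $Z$ as a separate point: $\th_c'Z=0=\eth_c'Z$ is simultaneously solvable here precisely because $l^a,n^a$ are shear-free and the cuts are spherical, the integrability condition that is expected to fail on shearing hypersurfaces.
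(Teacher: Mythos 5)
Your proposal is correct and follows essentially the same route as the paper: both use the lemma $[\Ro,\eth_c\th_c]=0$ to show $(\eth_c\th_c)^{\ell-s+k}\phi_k$ again solves the wave equation, apply the base conservation argument (via Stokes' theorem with a $Z$ annihilated by $\th_c'$ and $\eth_c'$) to that raised field, and then integrate by parts to transfer $\eth_c^{\ell-s+k}$ onto $Z$, producing $Y$. Your weight bookkeeping and the vanishing-commutator justifications match the paper's (largely implicit) steps, so the only difference is presentational.
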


\begin{proof}
Since $\Ro$ commutes with $\eth_c\th_c$, $(\eth_c\th_c)^{\ell-s+k}\phi_k$ satisfies the wave equation
 \begin{align}
    \Ro(\eth_c\th_c)^{\ell-s+k}\phi_k &= (\eth_c\th_c)^{\ell-s+k}\Ro\phi_k = 0 .
\end{align}
Let $Z$ be a function with weights $\{2\ell - s + 2k, -2\ell-1, -1\}$ satisfying $\th_c'Z = 0 = \eth_c'Z$. Then
\begin{align}
    \th_c'\bigl[Z\eth_c^{\ell-s+k}\th_c^{\ell-s+k+1}\phi_k\bigr] = Z(\Ro' + \eth'_c\eth_c)(\eth_c\th_c)^{\ell-s+k}\phi_k = \eth_c'\bigl[Z\eth_c(\eth_c\th_c)^{\ell-s+k}\phi_k\bigr] .
\end{align}
Using Stokes' theorem \eqref{eq:conformal_stokes} we obtain
\begin{align}
    \oint_\mathcal{S}Z\eth_c^{\ell-s+k}\th_c^{\ell-s+k+1}\phi_k\mathcal{S} = \oint_{\mathcal{S}'}Z\eth_c^{\ell-s+k}\th_c^{\ell-s+k+1}\phi_k\mathcal{S} .
\end{align}
The theorem follows by applying integration by parts to move $\eth_c^{\ell-s+k}$ onto $Z$. 
\end{proof}

\noindent These conservation laws lead us to a manifestly covariant and conformally invariant definition of conserved NP constants for solutions to the spin-$s$ wave equation:

\begin{definition}[Newman-Penrose constants]
    \label{def:NP_constants}
    Let $\phi_k$ be a massless field satisfying the spin-$s$ wave equation $\Ro\phi_k = 0$. Define the Newman-Penrose constants $Q_\ell$ associated to $\phi_k$ on any light cone $\mathcal{C}$ by
    \begin{align}
        \label{eq:NP_constants_definition}
        Q_\ell[\phi_k, \mathcal{C}] &:= \oint_\mathcal{S}Y\th_c^{\ell-s+k+1}\phi_k\mathcal{S} , 
    \end{align}
    where $Y$ is as in theorem \ref{thm:covariant_NP_conservation_law}. They are independent of the spherical cut $\mathcal{S}$. 
\end{definition}

\noindent Concretely, choose the standard outgoing and ingoing null coordinates $u$ and $v$ orthogonal to $\mathcal{S}$, so that the metric is $ds^2 = 4dudv-r^2dS^2$. On $\mathcal{C} = \{v = \mathrm{constant}\}$ the NP constants are given by
\begin{align}
    Q_\ell &= \oint_\mathcal{S}{}_{k-s}Y_\ell r^{-2\ell}\partial_v(r^2\partial_v)^{\ell-s+k}(r^{2s-k+1}\phi_k) dS .
\end{align}
They satisfy $\partial_uQ_\ell = 0$. When $s = k = 0$, we recover the usual expression for the NP constants for the scalar wave equation Eq.~\eqref{eq:flat_NP_laws}.

\section{Aretakis charges}
\label{sec:Aretakis_charge}

The question of whether or not there exist any functions $Y$ on a curved space-time such that the constants defined by Eq.~\eqref{eq:NP_constants_definition} are conserved was investigated by Aretakis in \cite{Aretakis2013}, in the special case $s = k = \ell = 0$ for the scalar wave equation $\Box\phi = 2(\Ro - \Psi_2 - 2\Lambda)\phi = 0$. It was found that $Q_{\ell=0}[\phi]$ is conserved if $Y$ is the solution to a certain elliptic differential equation on $\mathcal{N}$ (see also \cite{Aretakis2014}). Using the conformal GHP formalism, it is straightforward to generate a similar statement for wave equations of arbitrary spin: 

\begin{theorem}
\label{thm:elliptic_operator}
Associated to solutions $\phi_k$ to the spin-$s$ wave equation $(\Ro - V)\phi_k = 0$, where $V$ is an arbitrary function, there exist conserved quantities on a null hypersurface $\mathcal{N}$ generated by $l^a$ of the form
\begin{align}
    \oint\omega\th_c'\phi_k\mathcal{S}
\end{align}
if and only if there exists a function $\omega$ with weights $\{s, 2(k-s)+1, 1\}$ satisfying $\th_c\omega = 0$ and the elliptic differential equation
\begin{align}
\label{eq:elliptic_pde2}
    \bigl[\eth_c'(\eth_c - (\tau - \bar{\tau}')) + V\bigr]\omega &= 0 .
\end{align}
\end{theorem}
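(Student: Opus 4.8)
The plan is to derive a single flux identity, valid for \emph{every} solution $\phi_k$, that rewrites the difference of the candidate charges on two cuts as a bulk integral over $\mathcal{N}$ whose integrand is manifestly linear in $\phi_k$ and in its transverse derivative $\th_c'\phi_k$. The two conditions on $\omega$ will then emerge as the separate vanishing of the two coefficients in that integrand, giving both directions of the equivalence at once.

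First I would specialize the Stokes identity \eqref{eq:conformal_stokes} to a hypersurface generated by $l^a$ rather than $n^a$, i.e.\ invoke its primed version
\[
\int_\Sigma\bigl[\th_c\mu_0 - \eth_c'\mu_1 - (\eth_c - (\tau - \bar{\tau}'))\nu_1\bigr]\mathcal{N} = \oint_{\mathcal{S}'}\mu_0\mathcal{S} - \oint_\mathcal{S}\mu_0\mathcal{S} ,
\]
with weights $\mu_0 : \{-2,0,0\}$, $\mu_1 : \{-2,2,0\}$, $\nu_1 : \{-2,0,2\}$ (priming sends $\{w,p,q\}\mapsto\{w,-p,-q\}$ and exchanges $\th_c\leftrightarrow\th_c'$, $\eth_c\leftrightarrow\eth_c'$). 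The natural ansatz is $\mu_0 = \omega\th_c'\phi_k$; demanding that $\mu_0$ carry weights $\{-2,0,0\}$ fixes the weights of $\omega$ to be exactly $\{s,\,2(k-s)+1,\,1\}$, as in the statement.

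Next I would compute $\th_c\mu_0$ by the Leibniz rule and eliminate the transverse second derivative through the wave equation $\th_c\th_c'\phi_k = \eth_c\eth_c'\phi_k + V\phi_k$, obtaining $\th_c\mu_0 = (\th_c\omega)\th_c'\phi_k + \omega\eth_c\eth_c'\phi_k + V\omega\phi_k$. The middle term is then repackaged as an $\eth$-divergence by taking $\nu_1 = \omega\eth_c'\phi_k$ and $\mu_1 = -[(\eth_c - (\tau-\bar{\tau}'))\omega]\phi_k$ (whose weights I would check against the required $\{-2,0,2\}$ and $\{-2,2,0\}$). Two applications of the Leibniz rule—one peeling $\eth_c$ off $\omega\eth_c'\phi_k$, one peeling $\eth_c'$ off the resulting coefficient—collapse the bulk integrand to
\[
\th_c\mu_0 - \eth_c'\mu_1 - (\eth_c - (\tau-\bar{\tau}'))\nu_1 = (\th_c\omega)\,\th_c'\phi_k + \bigl\{\bigl[\eth_c'(\eth_c - (\tau-\bar{\tau}')) + V\bigr]\omega\bigr\}\phi_k .
\]
Writing $\mathcal{O} := \eth_c'(\eth_c - (\tau-\bar{\tau}')) + V$ for the operator of \eqref{eq:elliptic_pde2}, the primed Stokes identity becomes $\oint_{\mathcal{S}'}\omega\th_c'\phi_k\mathcal{S} - \oint_\mathcal{S}\omega\th_c'\phi_k\mathcal{S} = \int_\Sigma[(\th_c\omega)\th_c'\phi_k + (\mathcal{O}\omega)\phi_k]\mathcal{N}$. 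The ``if'' direction is then immediate: $\th_c\omega=0$ together with $\mathcal{O}\omega=0$ annihilates the bulk integrand for every $\Sigma$, so the charge is cut-independent.

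For the ``only if'' direction I would argue that conservation for \emph{all} solutions and all $\Sigma$ forces the bulk integrand to vanish pointwise on $\mathcal{N}$, and then exploit the freedom in the characteristic data: prescribing $\phi_k|_\mathcal{N}$ freely and $\th_c'\phi_k$ freely on one cut through a point $p$ lets the pair $(\phi_k(p),\,\th_c'\phi_k(p))$ range over all of $\mathbb{C}^2$. Since $\th_c\omega$ and $\mathcal{O}\omega$ are fixed functions independent of $\phi_k$, their coefficients must each vanish, yielding both $\th_c\omega = 0$ and the elliptic equation; that $\mathcal{O}$ has the principal symbol of the Laplacian on the closed cut $\mathcal{S}$ justifies calling it elliptic. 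I expect the main obstacle to be precisely this last independence argument—verifying that the transverse derivative $\th_c'\phi_k$ and the field $\phi_k$ really can be varied independently at a point, which rests on the structure of the characteristic initial value problem on $\mathcal{N}$. Everything preceding it is bookkeeping with GHP weights and Leibniz rules.
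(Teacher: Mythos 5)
Your computation is essentially identical to the paper's proof: the same ansatz $\mu_0 = \omega\th_c'\phi_k$, the same elimination of $\th_c\th_c'\phi_k$ via the wave equation, and the same choices $\mu_1 = -\phi_k(\eth_c - (\tau-\bar{\tau}'))\omega$ and $\nu_1 = \omega\eth_c'\phi_k$ fed into the primed Stokes identity, so the ``if'' direction matches the paper step for step. The only difference is that the paper's proof stops there and never addresses the converse, whereas your sketch of the ``only if'' direction---forcing the bulk integrand to vanish pointwise and then varying $\phi_k$ and $\th_c'\phi_k$ independently using the freedom in the characteristic initial data---is a reasonable outline of the half the paper leaves implicit.
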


\begin{proof}
The proof is a simple application of Stokes' theorem \eqref{eq:conformal_stokes}: 
\begin{align}
    &\th_c(\omega\th_c'\phi_k) = \th_c\omega\th_c'\phi_k + \omega(\eth_c\eth_c' + V)\phi_k \nonumber \\
    &\qquad\qquad\; = \th_c\omega\th_c'\phi_k + (\eth_c - (\tau - \bar{\tau}'))(\omega\eth_c'\phi_k) - (\eth_c - (\tau - \bar{\tau}'))\omega\eth_c'\phi_k + V\omega\phi_k \\
    &= \th_c\omega\th_c'\phi_k + (\eth_c - (\tau - \bar{\tau}'))(\omega\eth_c'\phi_k) - \eth_c'\bigl[\phi_k(\eth_c - (\tau - \bar{\tau}'))\omega\bigr] + \phi_k[\eth_c'(\eth_c - (\tau - \bar{\tau}')) + V]\omega . \nonumber
\end{align}
Hence, if $\th_c\omega = 0 = [\eth_c'(\eth_c - (\tau - \bar{\tau}')) + V]\omega$, we obtain the conservation law (cf. Eq.~\eqref{eq:conservation_law})
\begin{align}
    \th_c\mu_0 - \eth_c'\mu_1 - (\eth_c - (\tau - \bar{\tau}'))\nu_1 &= 0 ,
\end{align}
where
\begin{align}
    \mu_0 &= \omega\th_c'\phi_k , & \mu_1 &= -\phi_k(\eth_c - (\tau - \bar{\tau}'))\omega , & \nu_1 &= \omega\eth_c'\phi_k . 
\end{align}
\end{proof}

\noindent Aretakis discovered that on extremal Killing horizons, Eq.~\eqref{eq:elliptic_pde2} has a non-trivial solution in the case $s = 0$, $V = \Psi_2 + 2\Lambda$. 

\begin{theorem}[Aretakis \cite{Aretakis2013}]
\label{thm:Aretakis2013}
Let $\phi$ be a solution to the wave equation $\Box\phi = 0$, and let $\mathcal{H}$ be an extremal Killing horizon---a null hypersurface generated by a Killing vector field $k^a \overset{\mathcal{H}}{=} kl^a$ satisfying the extremality condition $k^b\nabla_bk^a \overset{\mathcal{H}}{=} 0$. Let $\mathcal{H}$ be foliated by closed space-like cross-sections $\mathcal{S}$ whose intrinsic metric $2m_{(a}\bar{m}_{b)}$ satisfies $\mathcal{L}_k(2m_{(a}\bar{m}_{b)}) \overset{\mathcal{H}}{=} 0$. Then there exists a positive function $\omega$ on $\mathcal{H}$, which is unique up to a factor, such that
\begin{align}
    \oint_\mathcal{S}\omega\th_c'\phi\mathcal{S}
\end{align}
is independent of the cross section $\mathcal{S}$. 
\end{theorem}

\noindent Aretakis proved this using a clever choice of coordinates. It is also instructive to see how this can be proven using the conformal GHP formalism. 

\begin{proof}
In conformal GHP form, 
\begin{align}
    \Box\phi = 2(\Ro - \Psi_2 - 2\Lambda)\phi = 0 .
\end{align}
Let $k^a$ be a Killing vector such that $k^a \overset{\mathcal{H}}{=} kl^a$. All but one of Killing's equations are conformal, the most relevant of which can be written as \cite{Kolassis1993}
\begin{align}
    \sigma &\overset{\mathcal{H}}{=} 0 , & (\eth_c - (\tau - \bar{\tau}'))k &\overset{\mathcal{H}}{=} 0 , & \eth_c'k &\overset{\mathcal{H}}{=} 0 , \label{eq:conformal_Killing}
\end{align}
where the scalar $k$ has weights $\{1, 1, 1\}$. These are the \textit{conformal Killing equations} $\nabla_{(a}k_{b)} \propto g_{ab}$. If in addition to satisfying the conformal Killing equations $k^a$ satisfies $0 = \nabla_ak^a \overset{\mathcal{H}}{=} (\th - 2\rho)k$, then $k^a$ is a Killing vector satisfying $\nabla_{(a}k_{b)} = 0$. \\
\\
Let $0 = \mathcal{L}_k\omega = k\th_c\omega$. Define the elliptic operator $\mathcal{O}$ as
\begin{align}
    \mathcal{O}\omega := k[\eth'_c(\eth_c - (\tau - \bar{\tau}')) + \Psi_2 + 2\Lambda]\omega = k[\eth'\eth + \Psi_2 + 2\Lambda]\omega ,
\end{align}
which is just the operator defined in theorem \ref{thm:elliptic_operator} multiplied by $k$. The operator $\mathcal{O}$ is \textit{real}. To see this, apply the commutator $[\eth, \eth']\omega = (\Psi_2 - \bar{\Psi}_2)\omega$ (cf. \eqref{eq:eth_commutator}) to compute
\begin{align}
    \mathcal{O}\omega &= k[\eth'\eth + \Psi_2 + 2\Lambda]\omega = k[\eth\eth' + \bar{\Psi}_2 + 2\Lambda]\omega = \overline{\mathcal{O}\bar{\omega}} .
\end{align}
Hence, by standard elliptic PDE theory (see, for example, \S6 of \cite{Evans1998}) the principal (maximal) eigenvalue $\lambda$ of $\mathcal{O}$ is real, and we can take $\omega$ to be the unique (up to a factor) positive principal eigenfunction. \\
\\
Next, note that with our choice of $m^a$, 
\begin{align}
    0 \overset{\mathcal{H}}{=} m^b\mathcal{L}_k(2m_{(a}\bar{m}_{b)}) = m^b\mathcal{L}_k(2l_{(a}n_{b)} - g_{ab}) = l_am^b\mathcal{L}_kn_b + n_am^b\mathcal{L}_kl_b \overset{\mathcal{H}}{=} l_am_b\mathcal{L}_kn^b ,
\end{align}
so that $0 \overset{\mathcal{H}}{=} m_a\mathcal{L}_ln^a = -(\tau - \bar{\tau}')$ (cf. \eqref{eq:conformal_spin_coefficients}). It follows that $\eth k \overset{\mathcal{H}}{=} -\bar{\tau}'k$ and $\eth'k \overset{\mathcal{H}}{=} -\tau'k$ (cf. \eqref{eq:conformal_Killing}), with which we compute (cf. \eqref{eq:spin_coeff_f})
\begin{align}
\begin{split}
    \lambda\oint_\mathcal{S}\omega\mathcal{S} &= \oint_\mathcal{S}\mathcal{O}\omega\mathcal{S} = \oint_\mathcal{S}[\eth\eth'k + k\Psi_2 + 2k\Lambda]\omega\mathcal{S} = \oint_\mathcal{S}[\eth(-\tau'k) + k\Psi_2 + 2k\Lambda]\omega\mathcal{S} \\
    &= \oint_\mathcal{S}k[-\eth\tau' + \tau'\bar{\tau}' + \Psi_2 + 2\Lambda]\omega\mathcal{S} = -\oint_\mathcal{S}k\omega(\th - \rho)\rho'\mathcal{S} .
\end{split}
\end{align}
The surface gravity $\kappa_g$ is defined as $\kappa_gk^a \overset{\mathcal{H}}{=} k^b\nabla_bk^a \overset{\mathcal{H}}{=} \th kk^a$. If $k^a$ is Killing, then $0 = \nabla_ak^a \overset{\mathcal{H}}{=} \th k - 2\rho k$, so that $\kappa_g = \th k = 2\rho k$. Furthermore, since $0 \overset{\mathcal{H}}{=} \mathcal{L}_k(k\rho') \overset{\mathcal{H}}{=} k\th(k\rho') \overset{\mathcal{H}}{=} k(k\th\rho' + \kappa_g\rho')$, it follows that
\begin{align}
    \lambda\oint_\mathcal{S}\omega\mathcal{S} &= \tfrac{3}{2}\oint_\mathcal{S}\omega\kappa_g\rho'\mathcal{S} .
\end{align}
If $\kappa_g = 0$, then $\lambda = 0$ and hence $\omega$ is a solution to $\mathcal{O}\omega = 0$. Theorem \ref{thm:Aretakis2013} now follows from theorem \ref{thm:elliptic_operator}. 
\end{proof}

\section{Newman-Penrose-like approximate conservation laws on shear-free null hypersurfaces}
\label{sec:NP_constants_in_Schwarzschild}

The conservation laws defined in \S\ref{sec:NP_constants_in_conformally_flat_spacetime} hold in every conformally flat space-time. This class includes very few physically relevant space-times since the full Weyl tensor is required to vanish. If the Weyl tensor is non-vanishing, Huygens' principle does not hold and we should not expect similar conservation laws to exist. In this section I derive approximate conservation laws that aim to generalize the NP conservation laws, and which may be integrated to obtain the asymptotics of solutions to wave equations of the form $(\Ro - V)\phi_k = 0$ in spherically symmetric space-times. The strategy will be to promote definition \ref{def:NP_constants} to a definition for generalized NP constants. The conservation theorem \ref{thm:covariant_NP_conservation_law} will be replaced by a flux-balance law, whose flux will decay rapidly under suitable conditions. 

\subsection{NP flux-balance laws in full generality}

Let $Q_\ell$ be defined as 
\begin{align}
    \label{eq:general_NP_consts}
    Q_\ell[\phi_k,\mathcal{S}] &:= \oint_\mathcal{S}Z\th_c(\eth_c\th_c)^{\ell-s+k}\phi_k\mathcal{S} , 
\end{align}
where $\mathcal{S}$ is now allowed to be \textit{any} space-like cut of a generic null hypersurface $\mathcal{N}$ generated by $n^a$, and where $Z$ is an appropriately weighted scalar (with weights as in theorem \ref{thm:covariant_NP_conservation_law}) satisfying $\th_c'Z = 0 = (\eth_c' + \bar{\tau} - \tau')Z$. Under the assumptions of theorem \ref{thm:covariant_NP_conservation_law}, Eq.~\eqref{eq:general_NP_consts} coincides with definition \ref{def:NP_constants}. The first major obstruction that presents itself is the integrability condition
\begin{align}
\label{eq:integrability_condition}
\begin{split}
    0 &= (\th_c'(\eth_c' + \bar{\tau} - \tau') - \eth_c'\th_c')Z = ([\th_c', \eth_c'] + \th_c'(\bar{\tau} - \tau'))Z \\
    &= \bigl[\sigma'(\eth_c + \tau -\bar{\tau}') - \kappa'\th_c - [\ell - s + 2k - 1](\eth_c\sigma' - \th_c\kappa') + 2(2\ell - s + 2k)\Psi_3\bigr]Z ,
\end{split}
\end{align}
which, because of the Goldberg-Sachs theorem, is satisfied when $\mathcal{N}$ is generated by shear-free rays, but not in general. \\
\\
Using Stokes' theorem \eqref{eq:conformal_stokes} we find the following flux-balance laws on $\mathcal{N}$: 
\begin{align}
    \label{eq:NP_flux_balance}
    Q_\ell[\phi_k, \mathcal{S}'] - Q_\ell[\phi_k, \mathcal{S}] &= \int_\Sigma\bigl[\th_c'(Z\th_c(\eth_c\th_c)^{\ell-s+k}\phi_k) - (\eth_c' + \bar{\tau} - \tau')(Z\eth_c(\eth_c\th_c)^{\ell-s+k}\phi_k)\bigr]\mathcal{N} \nonumber \\
    &= \int_\Sigma Z\Ro'(\eth_c\th_c)^{\ell-s+k}\phi_k\mathcal{N} , 
\end{align}
where $\Sigma \subset \mathcal{N}$ has future- and past boundaries $\mathcal{S}'$ and $\mathcal{S}$. Eq.~\eqref{eq:NP_flux_balance} holds for any function $\phi_k$ with weights $\{w,p,q\} = \{-s-1,2(s-k),0\}$. The commutator $[\Ro', \eth_c\th_c]$ generally does not vanish and involves the Weyl tensor. 

\subsection{NP flux-balance laws in spherical symmetry}
\label{sec:Kehrberger_charge}

Naturally, I will be using a tetrad adapted to the spherical symmetry, so that $l^a$ and $n^a$ are in- and outgoing null vectors orthogonal to the spheres of symmetry. The only non-zero GHP spin coefficients are $\rho$ and $\rho'$, which are real so that all conformal GHP spin coefficients are vanishing. Note that $\eth_c = \eth$ in this tetrad, but I will continue to write $\eth_c$ to make it clear that I am working with conformal quantities. I will be considering field components $\phi_k$, with weights $\{-s-1, 2(s-k), 0\}$, satisfying the spin-$s$ wave equation (cf. \eqref{eq:scalar_wave}, \eqref{eq:Teukolsky}, \eqref{eq:massless_wave})
\begin{align}
    (\Ro - V)\phi_k &= 0 . 
\end{align}
The NP flux-balance laws \eqref{eq:NP_flux_balance} are given by
\begin{align}
\begin{split}
    \label{eq:NP_flux_balance_2}
    Q_\ell[\phi_k, \mathcal{S}'] - Q_\ell[\phi_k, \mathcal{S}] &= \int_\Sigma Z\Ro'(\eth_c\th_c)^{\ell-s+k}\phi_k\mathcal{N} \\
    &= \int_\Sigma Z\Bigl([\Ro', (\eth_c\th_c)^{\ell-s+k}] - (\eth_c\th_c)^{\ell-s+k}(\Ro - \Ro' - V)\Bigr)\phi_k\mathcal{N} .
\end{split}
\end{align}
The commutators $[\Ro', \eth_c\th_c]$ and $\Ro - \Ro' = [\th_c, \th_c'] - [\eth_c, \eth_c']$ do not vanish, and involve the curvature. When acting on scalars with weights $\{w, p, 0\}$ the former is simply
\begin{align}
    \label{eq:Ro_commutator}
    \Ro - \Ro' &= -3p\Psi_2 ,
\end{align}
(cf. Eq.~\eqref{eq:wave_commutator}), while the latter can be found by first noting that since $[\th_c, \eth_c] = [\th_c, \eth_c'] = [\th_c', \eth_c] = [\th_c', \eth_c'] = 0$,
\begin{align}
    &\Ro\eth_c\th_c - \eth_c\th_c\Ro' = \th_c\th_c'\eth_c\th_c - \eth_c\eth_c'\eth_c\th_c - \eth_c\th_c\th_c'\th_c + \eth_c\th_c\eth_c'\eth_c = 0 ,
\end{align}
so that
\begin{align}
    \label{eq:Ro_ladder_commutator}
    [\Ro', \eth_c\th_c] &= (\Ro' - \Ro)\eth_c\th_c = 3(p+2)\Psi_2\eth_c\th_c .
\end{align}
Putting everything together, we obtain the following flux-balance laws: 

\begin{theorem}
\label{thm:spherical_NP_flux_balance}
In spherical symmetry, the NP flux-balance laws \eqref{eq:NP_flux_balance} reduce to
\begin{multline}
    \label{eq:NP_flux_balance_spherical}
    Q_\ell[\phi_k, \mathcal{S}'] - Q_\ell[\phi_k, \mathcal{S}] \\
    = \int_\Sigma Y\Bigl(\th_c^{\ell-s+k}(V\phi_k) + \sum_{n=0}^{\ell-s+k}6(n+s-k)\th_c^{\ell-s+k-n}\bigl[\Psi_2\th_c^n\phi_k\bigr]\Bigr)\mathcal{N} ,
\end{multline}
where $Y = (-\eth_c)^{\ell-s+k}Z$ has weights $\{\ell+k,-\ell-s+k-1,-\ell+s-k-1\}$. In terms of spin-weighted spherical harmonics, $Y$ consists of ${}_{k-s}Y_\ell$. 
\end{theorem}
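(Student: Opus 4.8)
The plan is to start from the fully general flux-balance identity \eqref{eq:NP_flux_balance_2}, whose right-hand side is $\int_\Sigma Z\,\Ro'(\eth_c\th_c)^{N}\phi_k\,\mathcal{N}$ with $N:=\ell-s+k$, and to show that in spherical symmetry the integrand collapses to $Y$ times the bracket in \eqref{eq:NP_flux_balance_spherical}. The essential structural input is that in the tetrad adapted to the symmetry every conformal spin coefficient vanishes, so that reading off \eqref{eq:commutator} (and its primed and barred versions) all four mixed commutators $[\th_c,\eth_c]$, $[\th_c,\eth_c']$, $[\th_c',\eth_c]$, $[\th_c',\eth_c']$ vanish on functions of \emph{every} weight, not merely on $q=0$, because the surviving $q$-dependent terms are all proportional to conformal spin coefficients or to $\Psi_1,\Psi_3$. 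In particular $\eth_c$ and $\th_c$ commute, and since the spherically symmetric scalars $\Psi_2$ and $V$ carry no angular dependence we have $\eth_c\Psi_2=\eth_c V=0$, so $\eth_c$ also commutes with multiplication by $\Psi_2$ and by $V$. These facts let me factor $(\eth_c\th_c)^{N}=\eth_c^{N}\th_c^{N}$ and, more generally, pull every $\eth_c$ to the far left in any monomial built from $\eth_c$, $\th_c$, $\Psi_2$, and $V$.

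With these commutation rules in hand I would evaluate $\Ro'(\eth_c\th_c)^{N}\phi_k$ in the two pieces already separated in \eqref{eq:NP_flux_balance_2}. The term $-(\eth_c\th_c)^{N}(\Ro-\Ro'-V)\phi_k$ is immediate: since $\phi_k$ has $q=0$ and weight $p=2(s-k)$, \eqref{eq:Ro_commutator} gives $(\Ro-\Ro')\phi_k=-6(s-k)\Psi_2\phi_k$, so this piece equals $(\eth_c\th_c)^{N}[\,V\phi_k+6(s-k)\Psi_2\phi_k\,]$, supplying the $V$-term and the $n=0$ term of the sum. For the commutator piece $[\Ro',(\eth_c\th_c)^{N}]\phi_k$ I would telescope, $[\Ro',D^{N}]=\sum_{m=0}^{N-1}D^{N-1-m}[\Ro',D]D^{m}$ with $D:=\eth_c\th_c$, and apply \eqref{eq:Ro_ladder_commutator} to each term. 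The key point is that every intermediate operand $D^{m}\phi_k$ again has $q=0$, so the simple coefficient $3(p+2)\Psi_2$ from \eqref{eq:Ro_ladder_commutator} applies with $p$ the weight $2(s-k)+2m$ of $D^{m}\phi_k$; reindexing by $n=m+1$ turns these into the coefficients $6(n+s-k)$ and produces exactly $\sum_{n=1}^{N}6(n+s-k)\,D^{N-n}(\Psi_2 D^{n}\phi_k)$. Adding the two pieces yields $\Ro'(\eth_c\th_c)^{N}\phi_k=D^{N}(V\phi_k)+\sum_{n=0}^{N}6(n+s-k)\,D^{N-n}(\Psi_2 D^{n}\phi_k)$.

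Next I would use the commutation facts to carry every $\eth_c$ to the left of each monomial: since $\eth_c$ passes through $\th_c$ and through $\Psi_2$, one checks $D^{N-n}(\Psi_2 D^{n}\phi_k)=\eth_c^{N}\,\th_c^{N-n}(\Psi_2\th_c^{n}\phi_k)$ and likewise $D^{N}(V\phi_k)=\eth_c^{N}\th_c^{N}(V\phi_k)$. Hence $\Ro'(\eth_c\th_c)^{N}\phi_k=\eth_c^{N}W$, where $W$ is precisely the bracket $\th_c^{N}(V\phi_k)+\sum_{n=0}^{N}6(n+s-k)\th_c^{N-n}(\Psi_2\th_c^{n}\phi_k)$ appearing in \eqref{eq:NP_flux_balance_spherical}. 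Finally, because $\tau-\bar\tau'=0$ in spherical symmetry, the integration-by-parts Lemma \ref{lemma:integration_by_parts} reduces on each cut to $\oint f\eth_c g\,\mathcal{S}=-\oint g\,\eth_c f\,\mathcal{S}$; foliating $\Sigma$ by the cuts $\mathcal{S}$, to which $\eth_c$ is tangent, and applying this $N$ times transfers $\eth_c^{N}$ from $W$ onto $Z$, turning $Z$ into $Y=(-\eth_c)^{N}Z$ and producing \eqref{eq:NP_flux_balance_spherical}.

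I expect the main obstacle to be the bookkeeping in the commutator piece: one must apply \eqref{eq:Ro_ladder_commutator} only to the $q=0$ operands $D^{m}\phi_k$—the combined operator $\eth_c\th_c$ preserves $q$, whereas a bare $\th_c$ does not, and the commutator coefficients are genuinely $q$-dependent—and track the weight $2(s-k)+2m$ carefully so that the telescoped coefficients assemble into $6(n+s-k)$. A secondary point to justify cleanly is that the angular integration by parts, stated in Lemma \ref{lemma:integration_by_parts} for a single closed cut, extends to the three-dimensional integral over $\Sigma$; this follows by foliating $\Sigma$ by the cuts and using that $\eth_c$ differentiates only within them.
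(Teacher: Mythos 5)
Your proposal is correct and follows essentially the same route as the paper: starting from \eqref{eq:NP_flux_balance_2}, telescoping $[\Ro',(\eth_c\th_c)^{\ell-s+k}]$ term by term with \eqref{eq:Ro_ladder_commutator}, handling the $\Ro-\Ro'-V$ piece with \eqref{eq:Ro_commutator}, and integrating by parts to transfer $\eth_c^{\ell-s+k}$ onto $Z$. The additional care you take in commuting $\eth_c$ past $\th_c$ and $\Psi_2$ before integrating by parts, and in extending Lemma \ref{lemma:integration_by_parts} to the three-dimensional integral over $\Sigma$, only makes explicit steps the paper's proof leaves implicit.
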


\begin{proof}
From the integrand of Eq.~\eqref{eq:NP_flux_balance_2} we compute, using the commutators \eqref{eq:Ro_commutator} and \eqref{eq:Ro_ladder_commutator}, 
\begin{align}
\begin{split}
    &\Bigl([\Ro', (\eth_c\th_c)^{\ell-s+k}] - (\eth_c\th_c)^{\ell-s+k}(\Ro - \Ro' - V)\Bigr)\phi_k \\
    &= \Bigl(\sum_{n=1}^{\ell-s+k}(\eth_c\th_c)^{\ell-s+k-n}[\Ro', (\eth_c\th_c)](\eth_c\th_c)^{n-1} + (\eth_c\th_c)^{\ell-s+k}(6(s-k)\Psi_2 + V)\Bigr)\phi_k \\
    &= \Bigl(\sum_{n=1}^{\ell-s+k}6(n+s-k)(\eth_c\th_c)^{\ell-s+k-n}\bigl[\Psi_2(\eth_c\th_c)^n\bigr] + (\eth_c\th_c)^{\ell-s+k}(6(s-k)\Psi_2 + V)\Bigr)\phi_k \\
    &= \sum_{n=0}^{\ell-s+k}6(n+s-k)(\eth_c\th_c)^{\ell-s+k-n}\bigl[\Psi_2(\eth_c\th_c)^n\phi_k\bigr] + (\eth_c\th_c)^{\ell-s+k}(V\phi_k) .
\end{split}
\end{align}
Inserting this back into \eqref{eq:NP_flux_balance_2} and integrating by parts $\ell-s+k$ times to move $\eth_c^{\ell-s+k}$ onto $Z$, we obtain \eqref{eq:NP_flux_balance_spherical}: 
\begin{multline}
    \label{eq:sph_appr_NP_law}
    Q_\ell[\phi_k, \mathcal{S}'] - Q_\ell[\phi_k, \mathcal{S}] \\
    = \int_\Sigma Z\Bigl(\sum_{n=0}^{\ell-s+k}6(n+s-k)(\eth_c\th_c)^{\ell-s+k-n}\bigl[\Psi_2(\eth_c\th_c)^n\phi_k\bigr] + (\eth_c\th_c)^{\ell-s+k}(V\phi_k)\Bigr)\mathcal{N} \\
    = \int_\Sigma(-\eth_c)^{\ell-s+k}Z\Bigl(\sum_{n=0}^{\ell-s+k}6(n+s-k)\th_c^{\ell-s+k-n}\bigl[\Psi_2\th_c^n\phi_k\bigr] + \th_c^{\ell-s+k}(V\phi_k)\Bigr)\mathcal{N} .
\end{multline}
\end{proof}

\subsection{Coordinate expressions}

To end this section, I will express Eqs. \eqref{eq:general_NP_consts} and \eqref{eq:sph_appr_NP_law} in coordinates. I compute a small sample of NP flux-balance laws in Schwarzschild, which may be compared with, for example, similar expressions appearing in Kehrberger's work \cite{Kehrberger2021a,Kehrberger2021c}. \\
\\
Let us consider the metric in double null coordinates
\begin{align}
    \label{eq:spherical_metric}
    ds^2 = 4fdudv - r^2dS^2 ,
\end{align}
and choose the tetrad $l^a\partial_a := D = \tfrac{1}{2}\partial_v$, $n^a\partial_a := D' = f^{-1}\partial_u$. The non-vanishing spin coefficients are given by
\begin{align}
    \rho &= -r^{-1}Dr , & \rho' &= -r^{-1}D'r , & \epsilon &= \tfrac{1}{2}f^{-1}Df ,
\end{align}
from which the conformal GHP operators can be computed: 
\begin{subequations}
\begin{align}
    \th_c &= D - 2b\epsilon + [w-b]\rho = D - bf^{-1}Df - [w-b]r^{-1}Dr , \\
    \th_c' &= D' + [w+b]\rho' = D - [w+b]r^{-1}D'r , 
\end{align}
\end{subequations}
where $b = \tfrac{1}{2}(p + q)$, which may conveniently be expressed as
\begin{subequations}
\begin{align}
    \th_c\eta &= r^{w-b}f^bD(r^{b-w}f^{-b}\eta) = \tfrac{1}{2}r^{w-b}f^b\partial_v(r^{b-w}f^{-b}\eta) , \\
    \th'_c\eta &= r^{w+b}D'(r^{-w-b}\eta) = r^{w+b}f^{-1}\partial_u(r^{-w-b}\eta) .
\end{align}
\end{subequations}
The function $Y$, as defined by theorem \ref{thm:spherical_NP_flux_balance}, has weight $w+b = k-1$ so that 
\begin{align}
    \th_c'Y = r^{k-1}f^{-1}\partial_u(r^{1-k}Y) = 0 ,
\end{align}
and hence
\begin{align}
    Y = 2^{\ell-s+k+1}r^{k-1}{}_{k-s}Y_\ell ,
\end{align}
where the numerical factor was chosen for convenience. The generalized NP constants \eqref{eq:general_NP_consts} are given by
\begin{align}
    Q_\ell &= \oint_\mathcal{S}{}_{k-s}Y_\ell r^{-2\ell}f^\ell\partial_v(r^2f^{-1}\partial_v)^{\ell-s+k}(r^{2s-k+1}f^{k-s}\phi_k) dS .
\end{align}
The flux-balance laws \eqref{eq:NP_flux_balance_spherical} are
\begin{multline}
    Q_\ell[\phi_k, \mathcal{S}'] - Q_\ell[\phi_k, \mathcal{S}] = \int_\Sigma2{}_{k-s}Y_\ell r^{-2\ell-2}\Bigl((r^2f^{-1}\partial_v)^{\ell-s+k}(r^{2s-k+3}V\phi_k) \\
    + \sum_{n=0}^{\ell-s+k}6(n+s-k)(r^2f^{-1}\partial_v)^{\ell-s+k-n}\bigl[\Psi_2r^{2-n}f^{s-k+n}(r^2f^{-1}\partial_v)^n(r^{2s-k+1}\phi_k)\bigr]\Bigr)dudS ,
\end{multline}
which are given in differential form as
\begin{multline}
    \label{eq:NP_laws_diff_form}
    \partial_uQ_\ell[\phi_k] = 2r^{-2\ell-2}\Bigl((r^2f^{-1}\partial_v)^{\ell-s+k}(r^{2s-k+3}(V\phi_k)_\ell) \\
    + \sum_{n=0}^{\ell-s+k}6(n+s-k)(r^2f^{-1}\partial_v)^{\ell-s+k-n}\bigl[\Psi_2r^{2-n}f^{s-k+n}(r^2f^{-1}\partial_v)^n(r^{2s-k+1}\phi_{k,\ell})\bigr]\Bigr) .
\end{multline}
Finally, as a consistency check, I will generate a few simple examples of Eq.~\eqref{eq:NP_laws_diff_form} which may be compared to the literature. 

\begin{example}[The $\ell = 0$ mode in Schwarzschild]
    Let $\phi = \phi(u,v)$ be a solution to the scalar wave equation on Schwarzschild supported on the $\ell = 0$ mode. Then
    \begin{align}
        \th'_c\th_c\phi - \Psi_2\phi &= 0 .
    \end{align}
    The Schwarzschild metric in double null coordinates is given by Eq.~\eqref{eq:spherical_metric} with $f = 1 - 2mr^{-1}$. The Weyl curvature scalar is given by $\Psi_2 = -mr^{-3}$. From this, we can easily compute (cf. \eqref{eq:Schw_spherical_mode})
    \begin{align}
        \th_c\th'_c\phi - \Psi_2\phi &= \tfrac{1}{2}r^{-1}f^{-1}\partial_v\partial_u(r\phi) + mr^{-4}(r\phi) , \nonumber \\
        \textrm{so that} \qquad \partial_u\partial_v(r\phi) &= -\frac{2mf}{r^3}(r\phi) .
    \end{align}
\end{example}

\begin{example}[The $\ell = 1$ mode in Schwarzschild]
    A less trivial example is given by the $\ell = 1$ case. Let $\phi = \phi(u,v)Y_{\ell=1}$ be a solution to the scalar wave equation on Schwarzschild supported on the $\ell = 1$ mode. Then
    \begin{align}
        0 &= \th_c\th'^2_c\phi - \th'_c\Psi_2\phi - 7\Psi_2\th'_c\phi \nonumber \\
        &= \tfrac{1}{2}r^{-1}f^{-2}\partial_v(r^{-2}f\partial_u(r^2f^{-1}\partial_u(r\phi))) + mr^{-5}(r\phi) + 7mr^{-4}f^{-1}\partial_u(r\phi) , 
    \end{align}
    so that
    \begin{align}
        \partial_v(r^{-2}f\partial_u(r^2f^{-1}\partial_u(r\phi))) &= -2mr^{-4}f^2(r\phi) - 14mr^{-3}f\partial_u(r\phi) . \label{eq:l1_conservation_law}
    \end{align}
    As an extra check, some more algebra easily reveals that this is consistent with Kehrberger's approximate conservation law (cf. Eq. (3.15) of \cite{Kehrberger2021c})
    \begin{align}
        \partial_v(r^{-2}\partial_u(r^2\partial_u(r\phi) + mr\phi)) &= 6m^2r^{-5}f(r\phi) - 12mr^{-3}f\partial_u(r\phi) . \label{eq:Kehrberger_conservation_law}
    \end{align}
\end{example}

\begin{example}[The $\ell = 2$ mode of the Teukolsky equation]
    Let $\psi_0 = \psi(u,v){}_2Y_{\ell=2}$ be a solution to the Teukolsky equation in Schwarzschild supported on the $\ell = 2$ mode. Then
    \begin{align}
        \th'_c\th_c\psi - 15\Psi_2\psi &= \tfrac{1}{2}r^{-1}f^{-1}\partial_u(r^{-4}f^2\partial_v(r^5f^{-2}\psi)) + 15mr^{-4}(r\psi) ,
    \end{align}
    so that
    \begin{align}
        \partial_u(r^{-4}f^2\partial_v(r^5f^{-2}\psi)) &= -30mr^{-2}f\psi .
    \end{align}
    Note that, ignoring derivatives, the left-hand side contains $r^{-4}\times r^5 = r^1$, while the right-hand side contains $r^{-2}$. Hence, the left-hand side contains three extra powers of $r$, as required (cf. Eq. (5.5) of \cite{Kehrberger2024a}). 
\end{example}

\section{Acknowledgements}

I would like to thank Neev Khera for stimulating discussions. I would also like to thank Béatrice Bonga and Eric Poisson for reading the original manuscript and providing helpful comments.  This work was supported by the Natural Sciences and Engineering Research Council of Canada.

\bibliography{references}

\appendix

\section{The GHP formalism}
\label{appendix:GHP}

The basis of the GHP formalism is the complex null tetrad $(l^a, n^a, m^a, \bar{m}^a)$ satisfying $l_an^a = 1$ and $m_a\bar{m}^a = -1$. The connection coefficients are the $12$ complex functions defined as
\begin{align}
    \boxed{\begin{matrix}
        \kappa & \epsilon & \tau' \\
        \rho & \beta' & \sigma' \\
        \sigma & \beta & \rho' \\
        \tau & \epsilon' & \kappa'
    \end{matrix}} \qquad := \qquad
    \boxed{\begin{matrix}
        m^aDl_a & \tfrac{1}{2}(n^aDl_a + m^aD\bar{m}_a) & \bar{m}^aDn_a \\
        m^a\delta l_a & \tfrac{1}{2}(n^a\delta l_a + m^a\delta\bar{m}_a) & \bar{m}^a\delta n_a \\
        m^a\delta'l_a & \tfrac{1}{2}(n^a\delta'l_a + m^a\delta'\bar{m}_a) & \bar{m}^a\delta'n_a \\
        m^aD'l_a & \tfrac{1}{2}(n^aD'l_a + m^aD'\bar{m}_a) & \bar{m}^aD'n_a
    \end{matrix}}
\end{align}
where
\begin{subequations}
\label{eq:tetrad_derivatives}
\begin{align}
    D &:= l^a\nabla_a \\
    \delta &:= m^a\nabla_a \\
    \delta' &:= \bar{m}^a\nabla_a \\
    D' &:= n^a\nabla_a . 
\end{align}
\end{subequations}
The \textit{priming operation} $'$ interchanges $l^a \leftrightarrow n^a$ and $m^a \leftrightarrow \bar{m}^a$. The Lorentz transformations keeping the null directions $l^a$ and $n^a$ fixed are given by
\begin{align}
    (\hat{l}^a, \hat{n}^a, \hat{m}^a, \hat{\bar{m}}^a) = (\lambda\bar{\lambda}l^a, \lambda^{-1}\bar{\lambda}^{-1}n^a, \lambda\bar{\lambda}^{-1}m^a, \lambda^{-1}\bar{\lambda}\bar{m}^a) .
\end{align}
A function $\eta$ is said to be a GHP-weighted scalar with weights $\{p, q\}$ if, under such a transformation, $\hat{\eta} = \lambda^p\bar{\lambda}^q\eta$. The weighted connection coefficients, and their respective weights, are given by
\begin{subequations}
\begin{align}
    &\kappa : \{3,1\} \\
    &\sigma : \{3,-1\} \\
    &\rho : \{1,1\} \\
    &\tau : \{1,-1\} , 
\end{align}
\end{subequations}
and their primed variants. Priming and complex conjugating changes the weights according to $\{p,q\}' = \{-p,-q\}$ and $\overline{\{p,q\}} = \{q,p\}$. The remaining coefficients combine with the derivatives \eqref{eq:tetrad_derivatives} to form the weighted operators
\begin{subequations}
\label{eq:GHP_operators}
\begin{align}
    \th &:= D - p\epsilon - q\bar{\epsilon} : \{1,1\} \\
    \eth &:= \delta - p\beta + q\bar{\beta}' : \{1,-1\} \\
    \eth' &:= \delta' + p\beta' - q\bar{\beta} : \{-1,1\} \\
    \th' &:= D' + p\epsilon + q\bar{\epsilon} : \{-1,-1\} .
\end{align}
\end{subequations}
The curvature scalars are defined as
\begin{align}
    \boxed{\begin{matrix}
        \Phi_{00} & \Phi_{01} & \Phi_{02} \\
        \Phi_{10} & \Phi_{11} & \Phi_{12} \\
        \Phi_{20} & \Phi_{21} & \Phi_{22} 
    \end{matrix}} \qquad := \qquad
    \boxed{\begin{matrix}
        \tfrac{1}{2}R_{ab}l^al^b & \tfrac{1}{2}R_{ab}l^am^b & \tfrac{1}{2}R_{ab}m^am^b \\
        \tfrac{1}{2}R_{ab}l^a\bar{m}^b & \tfrac{1}{2}R_{ab}l^an^b - 3\Lambda & \tfrac{1}{2}R_{ab}m^an^b \\
        \tfrac{1}{2}R_{ab}\bar{m}^a\bar{m}^b & \tfrac{1}{2}R_{ab}\bar{m}^an^b & \tfrac{1}{2}R_{ab}n^an^b
    \end{matrix}}
\end{align}
where $\Lambda := \tfrac{1}{24}R$, and
\begin{align}
\begin{split}
    \Psi_0 &:= C_{abcd}l^am^bl^cm^d \\
    \Psi_1 &:= C_{abcd}l^am^bl^cn^d \\
    \Psi_2 &:= C_{abcd}l^am^b\bar{m}^cn^d \\
    \Psi_3 &:= C_{abcd}l^an^b\bar{m}^cn^d \\
    \Psi_4 &:= C_{abcd}\bar{m}^an^b\bar{m}^cn^d .
\end{split}
\end{align}
The curvature scalars have weights
\begin{align}
    \Lambda &: \{0, 0\} & \Phi_{ij} &: \{2-2i, 2-2j\} & \Psi_k &: \{4-2k, 0\} .
\end{align}
The electromagnetic scalars are defined as
\begin{align}
\begin{split}
    \phi_0 &:= F_{ab}l^am^b \\
    \phi_1 &:= \tfrac{1}{2}F_{ab}(l^an^b + m^a\bar{m}^b) \\
    \phi_2 &:= F_{ab}\bar{m}^an^b .
\end{split}
\end{align}
We may find expressions of the curvature scalars in terms of the connection coefficients by using the definition of the Riemann tensor
\begin{multline}
    R_{abcd}W^aX^bY^cZ^d = \\
    Y^cW^a\nabla_a(X^b\nabla_bZ_c) - Y^cX^a\nabla_a(W^b\nabla_bZ_c) - Y^c(X^b\nabla_bW^a - W^b\nabla_bX^a)\nabla_aZ_c , 
    \label{eq:Riem}
\end{multline}
and substituting the tetrad vectors for $W^a, X^a, Y^a, Z^a$. The weighted components of (\ref{eq:Riem}) are
\begin{subequations}
\begin{align}
    \th\rho - \eth'\kappa &= \rho^2 + \sigma\bar{\sigma} - \bar{\kappa}\tau - \tau'\kappa + \Phi_{00} \\
    \th\sigma - \eth\kappa &= (\rho + \bar{\rho})\sigma - (\tau + \bar{\tau}')\kappa + \Psi_0 \\
    \th\tau - \th'\kappa &= (\tau - \bar{\tau}')\rho + (\bar{\tau} - \tau')\sigma + \Psi_1 + \Phi_{01} \\
    \eth\rho - \eth'\sigma &= (\rho - \bar{\rho})\tau + (\rho' - \bar{\rho}')\kappa - \Psi_1 + \Phi_{01} \\
    \eth\tau - \th'\sigma &= -\rho'\sigma - \bar{\sigma}'\rho + \tau^2 + \kappa\bar{\kappa}' + \Phi_{02} \\
    \th'\rho - \eth'\tau &= \rho\bar{\rho}' + \sigma\sigma' - \tau\bar{\tau} - \kappa\kappa' - \Psi_2 - 2\Lambda . \label{eq:spin_coeff_f}
\end{align}
\end{subequations}
The remaining curvature scalars appear in the commutators of the GHP operators \eqref{eq:GHP_operators}
\begin{subequations}
\begin{align}
    [\th, \th'] &= (\bar{\tau} - \tau')\eth + (\tau - \bar{\tau}')\eth' + pK^* + q\bar{K}^* \\
    [\th, \eth] &= \bar{\rho}\eth + \sigma\eth' - \bar{\tau}'\th - \kappa\th' + p(\rho'\kappa - \tau'\sigma + \Psi_1) - q(\bar{\sigma}\bar{\kappa} - \bar{\rho}\bar{\tau}' + \Phi_{01}) \\
    [\eth, \eth'] &= (\bar{\rho}' - \rho')\th + (\rho - \bar{\rho})\th' - pK + q\bar{K} , \label{eq:eth_commutator}
\end{align}
\end{subequations}
where
\begin{subequations}
\begin{align}
    K &:= -\rho\rho' + \sigma\sigma' - \Psi_2 + \Phi_{11} + \Lambda \\
    K^* &:= -\kappa\kappa' + \tau\tau' - \Psi_2 - \Phi_{11} + \Lambda .
\end{align}
\end{subequations}
If $l^a$ and $n^a$ are orthogonal to a space-like two-surface $\mathcal{S}$, $K + \bar{K}$ is the Gaussian curvature of $\mathcal{S}$.

\end{document}